\newtheorem{theorem}{Theorem}
\newtheorem{proposition}{Proposition}
\newtheorem{lemma}{Lemma}
\newtheorem{corollary}{Corollary}
\newtheorem{definition}{Definition}
\newcommand*{\medwedge}{\resizebox{.025\textwidth}{!}{$\bigwedge$}}%
\title{The Hamiltonian Dynamics of Magnetic Confinement in Toroidal Domains}
\author{Gabriel Martins}
\begin{document}
\maketitle


\begin{abstract}
We consider a class of magnetic fields defined over the interior of a manifold $M$ which go to infinity at its boundary and whose direction near the boundary of $M$ is controlled by a closed 1-form $\sigma_\infty \in \Gamma(T^*\partial M)$. We are able to show that charged particles in the interior of $M$ under the influence of such fields can only escape the manifold through the zero locus of $\sigma_\infty$. In particular in the case where the 1-form is nowhere vanishing we conclude that the particles become confined to its interior for all time.
\end{abstract}

\tableofcontents


\section{Introduction}
\label{sec:intro}

We study the global asymptotics of the motion of a charged particle inside a manifold with boundary under the influence of a magnetic field $\mathbf{B}$ defined over its interior. We show that if the magnetic field goes to infinity fast enough at the boundary in a controlled way then every particle in its interior becomes confined for all time.

The understanding of magnetic confinement is incredibly valuable, most prominently because of its application to the study of Tokamaks, torus-shaped devices used in fusion power generators. Our approximation excludes the possibility of collision of the interior particles with the Tokamak (represented here by the boundary of the manifold).

Our work is motivated by \cite{CdVT} where Colin de Verdi\`{e}re and Truc proved that a charged quantum particle becomes confined to the interior of a compact oriented manifold provided that the magnetic field goes to infinity fast enough at its boundary. In their work they pose the question of whether a classical analogue of their results would hold. In \cite{Ma} we give a partial answer to this question in dimension 2. In this work we present a generalization of the results in \cite{Ma} in arbitrary dimensions. In this more general case the topology of the region becomes relevant to our analysis.

Much work has been done on the analysis of the global behavior of solutions to Lorentz equations and related problems, e.g. \cite{B,C,M,Tr}. Many of the previous results concerning the classical system are based on a perturbative approach and are proven by applications of Moser's twist theorem for perturbations of integrable systems. Our strategy in this problem is different but not unrelated, by controlling the way the magnetic field tends to infinity at the boundary we're able to obtain a system of coordinates that shares enough of the properties of the action-angle coordinates in the integrable case and by taking advantage of conservation of energy we're able to establish confinement.


\section{Preliminaries and main results}
\label{sec:premain}


\subsection{Magnetic dynamics}
\label{ssec:magdyn}

A magnetic field on a Riemannian manifold $(M,g)$ is modeled by a closed 2-form $\mathbf{B} \in \Gamma \left( \medwedge^2 T^*M \right)$. This form induces an antisymmetric endomorphism $Y:TM \to TM$ via the relation:
  \begin{equation*}
  \mathbf{B}(u, v)= g( u, Y(v) ),\text{ for all }u,v\in T_qM\text{ and all }q\in M
  \end{equation*}

The corresponding equation of motion, called the Lorentz equation, for a particle of charge $e$ and mass $m$ moving in $M$ under the influence of $\mathbf{B}$ is:
  \begin{equation}\label{eq:lorentz}
  m\nabla_{\dot{q}}\dot{q} = eY_q(\dot{q})
  \end{equation}
where we denote by $Y_q: T_qM \to T_qM$ the fiber-wise linear map to make the dependence of $Y$ on the base point $q$ explicit.

\begin{definition}
We will call a solution $q(t)$ of equation \ref{eq:lorentz} a $\mathbf{B}$-geodesic.
\end{definition}

Because $Y$ is antisymmetric, the quantity $|\dot{q}|^2$ is an integral of motion of this system, since:
  \begin{equation*}
  \frac{d|\dot{q}|^2}{dt} = 2g( \dot{q}, \nabla_{\dot{q}}\dot{q}) = (2e/m)g( \dot{q}, Y\dot{q}) = (2e/m)\mathbf{B}( \dot{q}, \dot{q}) =  0
  \end{equation*}

Thus every solution has constant speed. However, unlike the geodesic flow, the dynamics on each level set $\{ |\dot{q}|^2 = c \}$ are not simple reparametrizations of each other.

As an example, on $\mathbb{R}^3$ if $\vec{B}$ is a vector field we are able to encode it as the 2-form $\mathbf{B}(u,v)=\langle u,v\times\vec{B}\rangle$ where $\langle u, v \rangle$ denotes the Euclidian inner product. The closed condition $d\mathbf{B} = 0$ for the 2-form is equivalent to the divergence-free condition $\nabla\cdot\vec{B} = 0$ for the vector field. The endomorphism $Y$ above is simply $Yv = v 
\times\vec{B}$ and equation (\ref{eq:lorentz}) in this case takes the familiar form:
  \begin{equation*}
  m\ddot{q} = e\dot{q}\times\vec{B}
  \end{equation*}

For the 2 dimensional picture one may consider a magnetic field of the form $\vec{B}=(0,0,B(x,y))$. This assumption forces particles in the $xy$-plane whose initial velocities are tangent to the plane to stay in this same plane for all time. The equation of a charged particle under the influence of this field is:
  \begin{equation}\label{eq:mag2d}
  m\ddot{q} = -eB(q)J\dot{q}
  \end{equation}
where $J = \left(\begin{smallmatrix}0&-1\\1&0\end{smallmatrix}\right)$ is the standard complex structure on $\mathbb{R}^2$.

If we consider a planar system with configuration space given by a bounded planar domain $\Omega \subset \mathbb{R}^2$ endowed with the standard Euclidian metric and consider a magnetic field $\mathbf{B}(q) = B(q)dx \wedge dy$, then equation (\ref{eq:lorentz}) for a charged particle moving in $\Omega$ under the influence of $\mathbf{B}$ reduces precisely to equation (\ref{eq:mag2d}).


\subsection{Hamiltonian Structure}
\label{ssec:hamstr}

In this section we describe how to lift the second order differential equation (\ref{eq:lorentz}) on $M$ to a Hamiltonian system on $T^*M$. The Hamiltonian vector field however will not preserve the standard symplectic form on $T^*M$, but a twisted version obtained by adding an appropriate multiple of the magnetic field.

We will first present this global formulation, then we will see that in the case where we can find a primitive 1-form (a magnetic potential) for $\mathbf{B}$, we may find a different Hamiltonian vector field which also lifts the Lorentz equation and that will indeed preserve the standard symplectic form on $T^*M$. This formulation will be useful for the momentum estimates used in our proofs.

Recall that on $T^*M$ we may define the tautological 1-form $\alpha \in \Gamma(T^*(T^*M))$ by:
  \begin{equation*}
  \alpha_p(\xi) = p(d\pi_p(\xi)),\quad \xi \in T_p(T^*M)
  \end{equation*}
where $\pi:T^*M \to M$ is the base-point projection and $d\pi:T(T^*M) \to TM$ is its derivative. The canonical symplectic form on $T^*M$ is then:
  \begin{equation*}
  \omega_0 = d\alpha
  \end{equation*}

On a trivialization $(q,p)$ induced by coordinates on $M$ the standard symplectic form has the simple expression $\omega_0 = dp_i \wedge dq^i$.

The twisted symplectic form is:
  \begin{equation*}
  \omega_\mathbf{B} = \omega_0 + e\pi^*\mathbf{B}
  \end{equation*}

We then define the Hamiltonian $H:T^*M \to \mathbb{R}$:
  \begin{equation*}
  H(q,p) = \frac{|p|_g^2}{2m}
  \end{equation*}
where $|\cdot|_g$ is the natural norm induced on $T^*M$ by the metric $g$. Given the Hamiltonian we may use the symplectic form to define the Hamiltonian vector field $X_\mathbf{B}$ by:
  \begin{equation*}
  \omega_\mathbf{B}(X_\mathbf{B}, \cdot ) = -dH
  \end{equation*}

\begin{definition}\label{magflow}
We call the flow of $X_\mathbf{B}$ the {\emph magnetic} flow of $\mathbf{B}$.
\end{definition}

A straightforward computation allows us to see that the integral curves of $X_\mathbf{B}$ are related to $\mathbf{B}$-geodesics in $M$.

\begin{proposition}\label{prop:lifting}
Let $\gamma: [a,b] \to T^*M$ be an integral curve of $X_\mathbf{B}$. Then the curve $c:[a,b] \to M$ given by $c = \pi \circ \gamma$ is a $\mathbf{B}$-geodesic. Conversely, every $\mathbf{B}$-geodesic $c:[a,b] \to M$ is the projection to $M$ of some integral curve $\gamma:[a,b] \to T^*M$ of $X_\mathbf{B}$.
\end{proposition}

Now suppose that we are in the special case where there is a 1-form $\mathbf{A}$ satisfying $d\mathbf{A} = \mathbf{B}$, we call $\mathbf{A}$ a magnetic potential for $\mathbf{B}$ (notice that $\mathbf{A}$ always exists locally since $d\mathbf{B} = 0$). In this case we may describe an alternative Hamiltonian structure for this system in a simpler way by using the magnetic potential. We define the Hamiltonian $H_\mathbf{A}:T^*M \to \mathbb{R}$ by:
  \begin{equation}\label{eq:hamiltonian}
  H_\mathbf{A}(q,p) = \frac{1}{2m}|p-e\mathbf{A}(q)|_g^2
  \end{equation}

We then use the standard symplectic form to define the Hamiltonian vector field $X_\mathbf{A}$ on $T^*M$ by:
  \begin{equation*}
  \omega_0(X_\mathbf{A}, \cdot ) = -dH_\mathbf{A}
  \end{equation*}

In natural coordinates since $\omega_0 = dp_i \wedge dq^i$ we may write the Hamiltonian vector field as:
  \begin{equation*}
  X_\mathbf{A} = (\partial_{p_i}H_\mathbf{A})\partial_{q^i} - (\partial_{q^i}H_\mathbf{A})\partial_{p_i}
  \end{equation*}

The equation for the magnetic flow takes the simple form of Hamilton's equations:
  \begin{equation}\label{eq:hamilton}
  \begin{array}{ccc}
  \dot{q}^i	& =	& \partial_{p_i} H_\mathbf{A} \\
  \dot{p}^i	& =	& -\partial_{q_i} H_\mathbf{A} \\
  \end{array}
  \end{equation}

The same result of proposition \ref{prop:lifting} holds for integral curves of $X_\mathbf{A}$: they are all lifts of $\mathbf{B}$-geodesics in $M$.

The disadvantage of this alternative approach is that if the magnetic field is not exact it can only be applied locally, additionally this alternative Hamiltonian vector field will be dependent on the choice of magnetic potential, which means that if the magnetic field is not exact, we may not use it to define a global Hamiltonian vector field on $T^*M$.


\subsection{Magnetic fields with $\sigma_\infty$ blow up}
\label{ssec:tamefields}

In this section we describe the class of magnetic fields whose magnetic flow we shall analyze. What we require from these fields is that they diverge to infinity at the boundary at a fast enough rate and that on a neighborhood of $\partial M$, they are controlled by a closed 1-form $\sigma_\infty$ defined over the boundary. To precisely phrase this definition we must first fix good coordinates at a neighborhood of $\partial M$.

Let $D = M \setminus \partial M$ denote the interior of $M$. Notice that for $\varepsilon > 0$ small enough the neighborhood
  \begin{equation*}
  \Omega_\varepsilon = \{q \in D \mid \text{dist}(q,\partial M) < \varepsilon\}
  \end{equation*}
is a collar neighborhood of the boundary and can be endowed with normal coordinates
  \begin{equation*}
  \begin{array}{rccc}
  \phi:	& (0,\varepsilon) \times \partial M  	& \to			& \Omega_\varepsilon\\
  		 	& (n,x)											& \mapsto	& \text{exp}_x(nN)
  \end{array}
  \end{equation*}
where $N$ denotes the unit inward normal vector to the boundary and $\text{exp}$ is the exponential map induced by the metric $g$.

\textbf{Remark:} Using these coordinates we also obtain that over $\Omega_\varepsilon$ the metric looks like a warped product:
  \begin{equation*}
  g = dn^2 + g_{\partial M}(n)
  \end{equation*}
where $g_{\partial M}(n)$ is a Riemannian metric on the fiber $\phi(\{n\} \times \partial M)$. Additionally, the distance function from the boundary is smooth over this neighborhood, since it is given by:
  \begin{equation}\label{eq:distance}
  \text{dist}(p,\partial M) = n(p) = \text{pr}_1 \circ \phi^{-1}(p)
  \end{equation}

The normal coordinates $\phi$ also induce a splitting over $\Omega_\varepsilon$ of the tangent and cotangent bundles. Let $\pi_{\partial M}:\Omega_\varepsilon \to \partial M$ be the natural projection $\pi_{\partial M} = \text{pr}_2 \circ \phi^{-1}$. For ease of notation denote $T_\varepsilon\partial M := \pi^*_{\partial M} T\partial M$ the subbundle of vectors tangent to the fibers $\pi_{\partial M}^{-1}(n)$, similarly denote $T^*_\varepsilon\partial M := \pi^*_{\partial M} T^*\partial M$. We then have the splittings:
  \begin{equation*}
  T\Omega_\varepsilon \cong \mathbb{R}\partial_n \oplus T_\varepsilon\partial M,
  			\quad T^*\Omega_\varepsilon \cong \mathbb{R}dn \oplus T^*_\varepsilon\partial M,
  \end{equation*}

We now describe the model magnetic fields which we shall use to control the way in which fields go to infinity at the boundary of $M$. Let $\sigma_\infty \in \Gamma(T^*\partial M)$ be a closed 1-form defined over the boundary and $f:(0,\varepsilon] \to \mathbb{R}$ be a smooth function such that $\int_0^\varepsilon f(n)dn = \pm\infty$. Furthermore denote $\sigma = \pi_{\partial M}^*\sigma_\infty$ which is then a section of $T^*_\varepsilon\partial M$.

\begin{definition}
We will say that a magnetic field $\mathbf{B}_\infty$ defined over $D$ has simple $\sigma_\infty$ blow up if over $\Omega_\varepsilon$ it satisfies:
\begin{equation*}
\left. \mathbf{B}_\infty \right|_{\Omega_\varepsilon} = f(n)dn\wedge\sigma
\end{equation*}
\end{definition}

We will denote by $Z(\sigma)$ the zero locus of $\sigma$, that is:
\begin{equation*}
  Z(\sigma) = \{ q \in \Omega_\varepsilon \mid \sigma_q = 0 \}
\end{equation*}

Let $G: TM \to T^*M$ denote the natural map induced by the Riemannian metric given by $Gv = g(v,\cdot)$. Over $\Omega_\varepsilon$ we define the vector field $S = G^{-1}\sigma$ and over the subset $\Omega_\varepsilon^* = \Omega_\varepsilon \setminus Z(\sigma)$ we define the unit vector field $\overline{S} = S/|S| = S/|\sigma|$. We provide now a brief description of the magnetic force of such a magnetic field near the boundary.

\begin{proposition}
Let $\mathbf{B}$ be a magnetic field with simple $\sigma_\infty$ blow up. Given a point $q\in\Omega_\varepsilon$, if $\mathbf{B}_q = 0$ the magnetic force $Y_qv$ is zero for all $v \in T_qM$. If $\mathbf{B}_q \neq 0$, then:
  \begin{equation*}
  	Y_q \partial_n = - f(n) |\sigma_q| \overline{S}_q,
		\quad Y_q \overline{S}_q = f(n)|\sigma_q| \partial_n,
		\quad Y_qv = 0 \text{ for } v \in \ker\mathbf{B}
  \end{equation*}

That is, the force is zero on particles with velocity $v$ tangent to the $(n-2)$-planes $\ker\mathbf{B}$, and it acts by a 90 degrees rotation composed with a scaling by a factor of $f(n)|\sigma|$ for velocities in $\text{span}(\partial_n, \overline{S})$.
\end{proposition}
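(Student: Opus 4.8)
The plan is to extract the fiber-wise endomorphism $Y_q$ directly from its defining relation $\mathbf{B}(u,v) = g(u, Y_q v)$ by expanding the model expression $\mathbf{B} = f(n)\, dn \wedge \sigma$ and rewriting each $1$-form evaluation as a metric inner product. Writing out the wedge gives, for all $u, v \in T_q M$,
\begin{equation*}
\mathbf{B}(u,v) = f(n)\big(dn(u)\,\sigma(v) - dn(v)\,\sigma(u)\big).
\end{equation*}
The two geometric inputs I need are the metric duals of the two $1$-forms involved: that $dn(w) = g(\partial_n, w)$, and that $\sigma(w) = g(S, w)$, the latter being exactly the definition $S = G^{-1}\sigma$.

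The first identity, together with the essential orthogonality $g(\partial_n, S) = 0$, is where the warped product structure $g = dn^2 + g_{\partial M}(n)$ does its work. Because the metric is block diagonal with respect to the splitting $T\Omega_\varepsilon \cong \mathbb{R}\partial_n \oplus T_\varepsilon\partial M$, the isomorphism $G$ carries $\partial_n$ to $dn$ and carries $T_\varepsilon\partial M$ to $T^*_\varepsilon\partial M$; since $\sigma \in \Gamma(T^*_\varepsilon\partial M)$, its dual $S = G^{-1}\sigma$ lies in $T_\varepsilon\partial M$ and is therefore $g$-orthogonal to $\partial_n$. Substituting $dn(u) = g(u, \partial_n)$ and $\sigma(u) = g(u, S)$ into the display above and collecting terms lets me write $\mathbf{B}(u,v) = g\big(u,\, f(n)(\sigma(v)\partial_n - dn(v) S)\big)$; nondegeneracy of $g$ then yields the closed form
\begin{equation*}
Y_q v = f(n)\big(\sigma(v)\,\partial_n - dn(v)\, S\big).
\end{equation*}

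From this single formula all three assertions fall out by substitution. Taking $v = \partial_n$ and using $dn(\partial_n) = 1$ and $\sigma(\partial_n) = g(S, \partial_n) = 0$ gives $Y_q \partial_n = -f(n) S = -f(n)|\sigma_q|\overline{S}_q$, where I use $|S| = |\sigma|$ and $S = |\sigma_q|\overline{S}_q$. Taking $v = \overline{S}_q = S/|\sigma_q|$ and using $dn(\overline{S}_q) = 0$ together with $\sigma(\overline{S}_q) = g(S,S)/|\sigma_q| = |\sigma_q|$ gives $Y_q \overline{S}_q = f(n)|\sigma_q|\partial_n$. Finally, when $\mathbf{B}_q \neq 0$ the vectors $\partial_n$ and $S$ are linearly independent, so the formula shows $Y_q v = 0$ exactly when $dn(v) = 0$ and $\sigma(v) = 0$, i.e. precisely on the codimension-two space $\operatorname{span}(\partial_n, \overline{S}_q)^\perp = \ker\mathbf{B}$ (note $\ker\mathbf{B}$, as the kernel of the $2$-form, coincides with $\ker Y_q$ by antisymmetry of $Y_q$). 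The degenerate case $\mathbf{B}_q = 0$ corresponds to $f(n) = 0$ or $\sigma_q = 0$, and in either case the displayed formula makes $Y_q$ vanish identically.

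I do not expect a genuine obstacle here, since the content is a linear-algebra computation in each fiber. The only point requiring care, and the step I would state explicitly, is the orthogonality $S \perp \partial_n$: it is what makes the two nonzero formulas clean, and it rests on $\sigma$ being a pullback from $\partial M$ (a section of $T^*_\varepsilon\partial M$) together with the warped product form of $g$. I would also note explicitly that the hypothesis $\mathbf{B}_q \neq 0$ forces both $f(n(q)) \neq 0$ and $\sigma_q \neq 0$, so that $\overline{S}_q$ is well defined and the scaling factor $f(n)|\sigma_q|$ is genuinely nonzero.
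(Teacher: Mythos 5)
Your proposal is correct and takes essentially the same approach as the paper: expand the wedge product $\mathbf{B} = f(n)\,dn\wedge\sigma$, use the metric duality $\sigma = GS$ together with nondegeneracy of $g$, and read off the action of $Y_q$ on $\partial_n$, $\overline{S}_q$, and $\ker\mathbf{B}$. If anything, your write-up is slightly more complete than the paper's, which records only the single pairing $g(\overline{S}, Y_q\partial_n) = -f(n)|\sigma|$ and the kernel computation, leaving the orthogonality $g(\partial_n, S) = 0$ and the vanishing of the remaining components of $Y_q\partial_n$ implicit, whereas your closed-form expression $Y_q v = f(n)\bigl(\sigma(v)\,\partial_n - dn(v)\,S\bigr)$ yields all three assertions, and the degenerate case $\mathbf{B}_q = 0$, by direct substitution.
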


\begin{proof}
We simply notice that $\mathbf{B} = f(n)dn \otimes \sigma - f(n)\sigma \otimes dn$ and since $\sigma = GS$, then $\sigma(\overline{S}) = [GS](\overline{S}) = g(S,S)/|\sigma| = |\sigma|$. We then have:
  \begin{equation*}
  	g(\overline{S},Y_q\partial_n) = \mathbf{B}_q(\overline{S}, \partial_n) = -f(n)\sigma(\overline{S})dn(\partial_n) = -f(n)|\sigma|
  \end{equation*}
and for $u \in T_qM, v \in \ker\mathbf{B}_q$ we obtain:
  \begin{equation*}
  	g(u,Y_qv) = \mathbf{B}_q(u,v) = 0
  \end{equation*}
\end{proof}

This description makes it plausible that if $|\sigma|$ remains nonzero and $f(n)$ approaches infinity at the boundary then particles that try to exit the region (which inevitably would have some nontrivial $\partial_n$-component in their velocity) will be pushed sideways in the direction of $S$ by a strong magnetic force and would not be able to leave the region.

\textbf{Remark:} Notice that for $\mathbf{B}$ to be closed we must have $d\sigma_\infty = 0$ which in terms of the vector field $S_\infty = G^{-1}\sigma_\infty$ reads $\nabla_{g} \times S_{\infty} = 0$. Here we compute the curl using the metric $g$ restricted to $\partial M$ and using the general formula for a Riemannian manifold $M$ and vector field $V \in \Gamma(TM)$:
  \begin{equation*}
  	\nabla_g \times V = \left( \medwedge^{n-2} G^{-1} \right) \ast d G V \in \Gamma(\medwedge^{n-2}TM)
  \end{equation*}

We will study magnetic fields that are perturbations of fields with simple $\sigma_\infty$ blow up. In the next definition we state which are the admissible perturbations.

\begin{definition}\label{def:pert}
We say that a magnetic field $\mathbf{B}_\text{per}$ is a \emph{$C^1$-bounded perturbation} if $|\mathbf{B}_\text{per}|$ and $|\nabla \mathbf{B}_\text{per}|$ are both bounded functions.
\end{definition}

We now provide the definition for the class of magnetic fields we shall study. These are fields defined on the interior $D = M \setminus \partial M$ which go to infinity controlled by the 1-form $\sigma_\infty$.

\begin{definition}
Given a closed 1-form $\sigma_\infty \in \Gamma(T^*\partial M)$, a magnetic field $\mathbf{B}$ defined over $D$ has \emph{$\sigma_\infty$-blow up} if there is a magnetic field $\mathbf{B}_\infty = f(n)dn\wedge\sigma$ with simple $\sigma_\infty$ blow up and a $C^1$-bounded perturbation $\mathbf{B}_\text{per}$ such that over $\Omega_\varepsilon$ we have $\mathbf{B} = \mathbf{B}_\infty + \mathbf{B}_\text{per}$.
\end{definition}

In preparation for the main theorem we will need the following two small lemmas. We will say that a $\mathbf{B}$-geodesic $c:I \to D$ is \emph{maximal} if it cannot be extended.

\begin{lemma}\label{lem:pushout}
If a maximal $\mathbf{B}$-geodesic $c:I \to D$ is contained in a compact set $K \subset D$, then it is defined for all time.
\end{lemma}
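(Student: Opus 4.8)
The plan is to reduce the statement to the standard escape lemma for flows of smooth vector fields, applied to the Hamiltonian lift of $c$ on the \emph{open} manifold $T^*D$. Since $\mathbf{B}$ is smooth on the interior $D$, the twisted symplectic form $\omega_\mathbf{B}$ and hence the Hamiltonian vector field $X_\mathbf{B}$ are smooth on all of $T^*D$, even though they may degenerate as one approaches $\partial M$. The whole argument will take place downstairs in phase space, so the singular behavior of $\mathbf{B}$ at the boundary never enters.

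First I would invoke Proposition \ref{prop:lifting} to realize $c$ as the projection $\pi\circ\gamma$ of an integral curve $\gamma:I\to T^*D$ of $X_\mathbf{B}$, and write $\gamma(t)=(c(t),p(t))$. Because projections of integral curves are $\mathbf{B}$-geodesics and the fiber coordinate is determined by $\dot c$, the maximality of $c$ is equivalent to the maximality of $\gamma$; in particular the two curves share the same interval of definition $I$, so it suffices to show $\gamma$ is defined for all time. Next I would use the two conserved quantities already at hand. By hypothesis the base point $c(t)$ lies in the compact set $K$ for all $t\in I$. For the fiber coordinate, recall that $|\dot c|^2$ is an integral of motion, equivalently that the energy $H(\gamma(t))=|p(t)|_g^2/(2m)$ is constant along the flow of $X_\mathbf{B}$; hence $|p(t)|_g\equiv R$ for some fixed $R\ge 0$. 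Consequently $\gamma$ takes values in the set
\[
  \tilde K = \{\,(q,p)\in T^*D \mid q\in K,\ |p|_g = R\,\},
\]
which is the restriction to the compact base $K$ of the closed radius-$R$ sphere bundle of $T^*M$, and is therefore compact.

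Finally I would apply the escape lemma: a maximal integral curve of a smooth vector field whose image is contained in a compact set is defined for all $t\in\mathbb{R}$. Since $\gamma(I)\subset\tilde K$ with $\tilde K$ compact, we conclude $I=\mathbb{R}$, and therefore $c$ is defined for all time.

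The only point requiring care — and the crux of the argument — is the passage from ``the base point stays in $K$'' to ``the full phase-space trajectory stays in a compact set'': a priori the momenta could run off to infinity in finite time even while the configuration remains bounded. This is exactly what conservation of energy rules out, confining $\gamma$ to a sphere bundle over $K$. Everything else is the standard ODE escape lemma, and the fact that we run it on $T^*D$ rather than $T^*M$ is what lets us ignore the blow up of $\mathbf{B}$ near $\partial M$.
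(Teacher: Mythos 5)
Your proposal is correct and follows essentially the same argument as the paper: lift $c$ to an integral curve $\gamma$ of $X_\mathbf{B}$ on $T^*D$, use conservation of energy to confine $\gamma$ to the compact set $\pi^{-1}(K)\cap H^{-1}(H_0)$ (your sphere bundle $\tilde K$ over $K$), and conclude via the standard escape lemma. The extra care you take in noting that maximality of $c$ corresponds to maximality of $\gamma$, and that energy conservation is what prevents momenta from blowing up, is exactly the content the paper's proof relies on implicitly.
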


\begin{proof}
Let $\gamma: I \to T^*D$ be an integral curve of $X_\mathbf{B}$ lifting $c$, so that $I \subset \mathbb{R}$ is also the maximal domain of definition for $\gamma$. Let $\pi: T^*M \to M$ be the base point projection and let $H_0 = H(\gamma(0))=H(\gamma(t))$ be the energy of $\gamma$.

If $c$ is contained in a compact subset $K$ of $D$ then $\gamma$ is contained in the compact subset $\pi^{-1}(K) \cap H^{-1}(H_0)$ of $T^*D$ and since a maximal integral curve of a vector field contained in a compact set must be defined for all time, $\gamma$ must be defined for all time, that is $I = \mathbb{R}$.
\end{proof}

\begin{lemma}\label{lem:lim}
Let $c:I \to D$ be a $\mathbf{B}$-geodesic with $I \subset \mathbb{R}$ its maximal domain of definition. Suppose $T^\infty = \sup (I) \neq \infty$, then the limit $x^\infty = \lim_{t\nearrow T^\infty}c(t)$ exists and it belongs to $\partial M$. Similarly, if $T^{-\infty} = \inf (I) \neq -\infty$, the limit $x^{-\infty} = \lim_{t\searrow T^{-\infty}}c(t)$ exists and it belongs to $\partial M$.
\end{lemma}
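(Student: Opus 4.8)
The plan is to treat the forward endpoint $T^\infty < \infty$; the backward case of $T^{-\infty}$ is entirely analogous, since the only inputs are the constant-speed property of $\mathbf{B}$-geodesics and Lemma \ref{lem:pushout}, both of which are insensitive to the direction of time. The argument splits into two steps: (i) produce the limit $x^\infty$ using constant speed, and (ii) locate it on $\partial M$ using Lemma \ref{lem:pushout}.

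For step (i), I would use that $|\dot c|$ is constant, say equal to $v$. For $t_1 < t_2$ in $I$ the segment $c|_{[t_1, t_2]}$ is an admissible path in $M$, so the Riemannian distance obeys $\text{dist}(c(t_1), c(t_2)) \le \int_{t_1}^{t_2} |\dot c|\, ds = v(t_2 - t_1)$. Thus $t \mapsto c(t)$ is Lipschitz in $t$, and as $t_1, t_2 \nearrow T^\infty$ the right-hand side tends to $0$, so $\{c(t)\}$ is Cauchy in $(M, \text{dist})$. Since $M$ is compact it is complete as a metric space, and therefore $x^\infty = \lim_{t \nearrow T^\infty} c(t)$ exists in $M$. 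The one subtlety here is that distances are computed in all of $M$ (so that completeness applies), while the curve itself lives in $D$; this is harmless, because the curve segment is still a legitimate competitor path for $\text{dist}$.

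For step (ii), I would argue by contradiction and suppose $x^\infty \in D$. Let $\gamma: I \to T^*D$ be the integral curve of $X_\mathbf{B}$ lifting $c$ furnished by Proposition \ref{prop:lifting}, with the same maximal domain $I$, and let $H_0 = H(\gamma(t))$ denote its constant energy. Choosing a compact neighborhood $K \subset D$ of $x^\infty$, convergence yields a $t_0 < T^\infty$ with $c(t) \in K$ for all $t \in [t_0, T^\infty)$, whence $\gamma(t) \in \pi^{-1}(K) \cap H^{-1}(H_0)$ for such $t$. This set is a compact subset of $T^*D$, exactly as in the proof of Lemma \ref{lem:pushout}. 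But a maximal integral curve whose maximal forward time $T^\infty$ is finite must eventually leave every compact set (the escape lemma, which is the principle underlying Lemma \ref{lem:pushout}); this contradicts $\gamma([t_0, T^\infty))$ being trapped in a fixed compact set. Hence $x^\infty \notin D$, that is $x^\infty \in \partial M$.

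I expect step (ii) to be the main obstacle, in the sense that it is where the structure of the problem genuinely enters: one must convert convergence of the base curve $c$ into confinement of the lifted orbit $\gamma$ to a compact region of phase space, which needs energy conservation to control the momentum, and then invoke the escape lemma to rule out an interior limit. Step (i) is routine once constant speed is used, the only point requiring care being the $M$-versus-$D$ distinction noted above.
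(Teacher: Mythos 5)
Your proposal is correct and follows essentially the same route as the paper: existence of the limit via constant speed $\Rightarrow$ Lipschitz $\Rightarrow$ Cauchy (with completeness of the compact $M$), and location on $\partial M$ by noting that an interior limit would trap the tail of the orbit in a compact subset of $D$, contradicting Lemma \ref{lem:pushout} (the escape lemma). The only difference is presentational: the paper states the two steps in the opposite order and invokes Lemma \ref{lem:pushout} directly rather than spelling out the escape-lemma argument for the lifted curve as you do.
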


\begin{proof}
First notice that by lemma \ref{lem:pushout} if either of the limits $x^{\pm\infty}$ exists, they must belong to $\partial M$, since for example if $T^\infty \neq \infty$ and $x^\infty \notin \partial M$ then the future of the $\mathbf{B}$-geodesic would be contained in a compact subset of $D$ and therefore it would have to be defined for all future time, contradicting $T^\infty \neq \infty$.

Now, since $|\dot{c}(t)|$ is constant, the curve $c$ is Lipschitz which means that the sequence $(c(T^\infty-1/n))_{n \in \mathbb{N}}$ is Cauchy and hence must converge to some $x^\infty$ which clearly must be the desired limit.
\end{proof}

Given $\sigma_\infty \in \Gamma(T^*\partial M)$, recall that $Z(\sigma_\infty)$ denotes its zero locus. We may now state our theorem:

\begin{theorem}\label{thm:ndmagtrap}
Let $\mathbf{B}$ be a magnetic field defined on $D$ with $\sigma_\infty$-blow up. If a $\mathbf{B}$-geodesic $c(t)$ in $D$ reaches the boundary in finite time in the future, then $x^\infty \in Z(\sigma_\infty)$, similarly if it approaches $\partial M$ in finite time in the past then $x^{-\infty} \in Z(\sigma_\infty)$.
\end{theorem}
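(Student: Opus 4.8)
The plan is to argue by contradiction: assume the forward limit point $x^\infty$ lies \emph{outside} $Z(\sigma_\infty)$ and derive a contradiction with conservation of energy. By Lemma \ref{lem:lim} we already know that $x^\infty$ exists and lies on $\partial M$, and that $n(c(t)) = \mathrm{dist}(c(t),\partial M)\to 0$ as $t\nearrow T^\infty$. Write $d=\dim M$. Since $\sigma_\infty$ is closed and, by the contradiction hypothesis, $\sigma_\infty(x^\infty)\neq 0$, the Poincar\'e lemma lets me write $\sigma_\infty = dh$ on a neighborhood $U\subset\partial M$ of $\pi_{\partial M}(x^\infty)$; as $dh\neq 0$ there I may take $h=x^1$ as the first of a coordinate system $(x^1,\dots,x^{d-1})$ on $U$. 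Pulling back to the collar gives coordinates $(n,x^1,\dots,x^{d-1})$ on a box $W=(0,\delta)\times U$, and since $\sigma=\pi_{\partial M}^*\sigma_\infty$ we get the crucial normalization $\sigma = dx^1$ on $W$. Because $c(t)\to x^\infty$, there is some $t_0<T^\infty$ with $c(t)\in W$ for all $t\in[t_0,T^\infty)$.

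Next I would build a local magnetic potential on $W$ adapted to this splitting. Setting $F'=f$, the form $\mathbf{A}_\infty = F(n)\sigma = F(n)\,dx^1$ is a primitive of $\mathbf{B}_\infty$ (using $d\sigma=0$), and its only nonzero component $A_{\infty,1}=F(n)$ depends on $n$ alone. For the perturbation I would produce a primitive $\mathbf{A}_{\text{per}}$ of the closed form $\mathbf{B}_{\text{per}}$ on the box $W$ via the standard homotopy operator; since $|\mathbf{B}_{\text{per}}|$ and $|\nabla\mathbf{B}_{\text{per}}|$ are bounded and $W$ has finite diameter, this yields an $\mathbf{A}_{\text{per}}$ with $|\mathbf{A}_{\text{per}}|$ and $|\nabla\mathbf{A}_{\text{per}}|$ bounded --- this is exactly where the $C^1$-boundedness of Definition \ref{def:pert} enters. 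Then $\mathbf{A}=\mathbf{A}_\infty+\mathbf{A}_{\text{per}}$ is a primitive of $\mathbf{B}$ on $W$, and the lift $\gamma(t)=(c(t),p(t))$ of $c$ determined by $p-e\mathbf{A}=m\,G\dot c$ is an integral curve of the Hamiltonian $H_\mathbf{A}$ of equation (\ref{eq:hamiltonian}), with constant energy $H_\mathbf{A}=\tfrac{m}{2}|\dot c|^2$.

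The heart of the argument is a momentum estimate for the coordinate conjugate to $x^1$. By Hamilton's equation $\dot p_1=-\partial_{x^1}H_\mathbf{A}$, and when I expand $\partial_{x^1}H_\mathbf{A}=\tfrac{1}{2m}(\partial_{x^1}g^{ab})(p_a-eA_a)(p_b-eA_b)-\tfrac{e}{m}g^{ab}(p_a-eA_a)\partial_{x^1}A_b$, the potentially dangerous term $\partial_{x^1}A_1=\partial_{x^1}F(n)$ vanishes because $F$ is independent of $x^1$. The surviving terms involve only derivatives of the metric coefficients (smooth up to $\partial M$ by the warped-product form $g=dn^2+g_{\partial M}(n)$, hence bounded on $\overline W$) and derivatives of $\mathbf{A}_{\text{per}}$ (bounded), each multiplied by factors $p_a-eA_a=m(G\dot c)_a$ that are bounded by conservation of energy. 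Hence $|\dot p_1|\le C$ on $[t_0,T^\infty)$, so $p_1$ stays bounded as $t\nearrow T^\infty$, where finiteness of $T^\infty$ is essential. On the other hand, conservation of energy bounds $p_1-eA_1=m\,g(\dot c,\partial_{x^1})$. Subtracting, $eA_1=eF(n)+e(A_{\text{per}})_1$ remains bounded, so $F(n(c(t)))$ is bounded; but $n(c(t))\to 0$ while $\int_0^\varepsilon f=\pm\infty$ forces $|F(n)|\to\infty$ as $n\to 0^+$, a contradiction. Thus $x^\infty\in Z(\sigma_\infty)$, and the statement for the past follows by running the same argument backward in time.

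The genuine insight is the coordinate choice that turns the diverging direction of the potential into a cyclic-type variable, so that the blow-up cancels in $\partial_{x^1}H_\mathbf{A}$; the one technically delicate step I expect to cost the most care is the construction of the primitive $\mathbf{A}_{\text{per}}$ together with the verification that all the ``bounded'' estimates hold \emph{uniformly up to the boundary} $n=0$, which is where the warped-product structure of the metric and the $C^1$-boundedness hypothesis are both indispensable.
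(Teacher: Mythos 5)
Your proposal is correct and follows essentially the same route as the paper: adapted collar coordinates in which $\sigma$ becomes the differential of a coordinate (the paper's $\theta$, your $x^1$), a potential $\mathbf{A}=F(n)\,d\theta+\mathbf{A}_{\text{per}}$ with a $C^1$-bounded primitive of the perturbation built by the scaling homotopy operator, a Hamilton's-equation estimate showing the conjugate momentum $p_\theta$ has bounded derivative (the blow-up term dropping out because $F$ is independent of $\theta$), and the final contradiction between boundedness of $A_\theta$ on a finite time interval and $|F(n(c(t)))|\to\infty$. The only cosmetic difference is that you obtain the adapted coordinate by applying the Poincar\'e lemma to $\sigma_\infty$ on $\partial M$ and pulling back, while the paper finds the primitive directly on the collar; these are equivalent.
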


Finally, as a consequence of theorem \ref{thm:ndmagtrap} we obtain the following:

\begin{theorem}\label{thm:ndconf}
Let $\sigma_\infty$ be a non-vanishing closed 1-form on $\partial M$. Let $\mathbf{B}$ be a magnetic field with $\sigma_\infty$-blow up, then every $\mathbf{B}$-geodesic in $D$ is defined for all time. In particular if a $\mathbf{B}$-geodesic approaches the boundary it must take infinite time to do so.
\end{theorem}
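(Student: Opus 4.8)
The plan is to deduce this directly from Theorem \ref{thm:ndmagtrap} together with Lemma \ref{lem:lim}, exploiting the fact that the non-vanishing of $\sigma_\infty$ forces its zero locus to be empty, $Z(\sigma_\infty) = \emptyset$. The argument is a contrapositive repackaging of the trapping theorem rather than a fresh computation, so I expect the work to be purely logical assembly with no new estimates.

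First I would let $c: I \to D$ be a maximal $\mathbf{B}$-geodesic with maximal domain $I \subset \mathbb{R}$, and argue by contradiction, supposing $I \neq \mathbb{R}$. Then at least one endpoint of $I$ is finite; say $T^\infty = \sup(I) \neq \infty$, the case $T^{-\infty} = \inf(I) \neq -\infty$ being entirely symmetric. By Lemma \ref{lem:lim} the limit $x^\infty = \lim_{t \nearrow T^\infty} c(t)$ exists and lies on $\partial M$, so the geodesic reaches the boundary in finite time in the future.

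Now I would invoke Theorem \ref{thm:ndmagtrap}: since $c$ reaches the boundary in finite forward time, its limit point must satisfy $x^\infty \in Z(\sigma_\infty)$. But $\sigma_\infty$ is nowhere vanishing by hypothesis, so $Z(\sigma_\infty) = \emptyset$ and no point whatsoever can belong to it. This contradicts the existence of $x^\infty$, so the assumption $T^\infty \neq \infty$ is untenable; the symmetric treatment of the past endpoint likewise rules out $T^{-\infty} \neq -\infty$. Hence $I = \mathbb{R}$ and every $\mathbf{B}$-geodesic is defined for all time.

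Finally, the concluding sentence about approach to the boundary is immediate from the same reasoning: if a $\mathbf{B}$-geodesic did reach $\partial M$ in finite time, Lemma \ref{lem:lim} would again produce a boundary limit point forced by Theorem \ref{thm:ndmagtrap} to lie in the empty set $Z(\sigma_\infty)$, which is impossible, so any approach to the boundary must consume infinite time. All the genuine mathematical content lives upstream in Theorem \ref{thm:ndmagtrap}; the only thing to verify here is that the two quoted results chain together without gaps, and I do not anticipate any obstacle beyond being careful to handle both time directions rather than only the future one.
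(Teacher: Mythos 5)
Your proposal is correct and is exactly the argument the paper intends: the paper presents Theorem \ref{thm:ndconf} as an immediate consequence of Theorem \ref{thm:ndmagtrap} (via Lemma \ref{lem:lim}), and your contradiction argument---a finite endpoint of the maximal interval forces a boundary limit point, which Theorem \ref{thm:ndmagtrap} places in the empty set $Z(\sigma_\infty)$---is precisely that deduction, handled correctly in both time directions.
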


Equivalently we obtain completeness of the magnetic flow as a corollary of theorem \ref{thm:ndconf}

\begin{corollary}
Let $\mathbf{B}$ be a magnetic field satisfying the hypothesis of Theorem \ref{thm:ndconf}. Then, the Hamiltonian flow of $X_{\mathbf{B}}$ on $T^*D$ is complete.
\end{corollary}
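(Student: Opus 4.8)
The plan is to reduce completeness of the flow of $X_\mathbf{B}$ directly to Theorem \ref{thm:ndconf}, using the correspondence between integral curves and $\mathbf{B}$-geodesics recorded in Proposition \ref{prop:lifting}. First I would fix a maximal integral curve $\gamma: I \to T^*D$ of $X_\mathbf{B}$, with $I \subset \mathbb{R}$ its maximal interval of definition, and set $c = \pi \circ \gamma$. By Proposition \ref{prop:lifting} the projection $c$ is a $\mathbf{B}$-geodesic, so it is at least defined on $I$.

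The essential point to verify is that $I$ is in fact the \emph{maximal} domain of definition of $c$ as a $\mathbf{B}$-geodesic. This follows from the observation that $\gamma$ is completely recovered from $c$ together with its velocity: for the twisted Hamiltonian $H = |p|_g^2/(2m)$ the Hamilton equation in the $dp_i$ direction reads $\dot q^i = \partial_{p_i} H = p^i/m$ (the magnetic term $e\pi^*\mathbf{B}$ contributes only $dq\wedge dq$ components and hence does not affect this relation), so that $p = m\,G\dot c$ and therefore $\gamma(t) = (c(t),\, m\,G\dot c(t))$. Thus specifying $\gamma$ is equivalent to specifying the pair $(c,\dot c)$, and $\gamma$ extends past an endpoint of $I$ if and only if $c$ does. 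Concretely, were $c$ extendable as a $\mathbf{B}$-geodesic to a strictly larger interval, the converse half of Proposition \ref{prop:lifting} would produce an integral curve of $X_\mathbf{B}$ over that larger interval; matching values at an interior point of $I$ and invoking uniqueness of integral curves, this lift would agree with $\gamma$ on $I$ and thereby extend it, contradicting maximality. Hence $I$ coincides with the maximal domain of the $\mathbf{B}$-geodesic $c$.

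With this identification in hand the conclusion is immediate. Since $\mathbf{B}$ satisfies the hypotheses of Theorem \ref{thm:ndconf}, the $\mathbf{B}$-geodesic $c$ is defined for all time, so $I = \mathbb{R}$. As $\gamma$ was an arbitrary maximal integral curve, every maximal integral curve of $X_\mathbf{B}$ is defined on all of $\mathbb{R}$, which is exactly the completeness of the flow on $T^*D$.

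I expect the only delicate step to be the identification of the two maximal intervals in the second paragraph; everything else is a direct appeal to the lifting proposition and the confinement theorem. The one subtlety worth flagging is that this correspondence must be invoked for the twisted field $X_\mathbf{B}$ rather than for the potential-dependent $X_\mathbf{A}$, which in general exists only locally and depends on the choice of $\mathbf{A}$. Since the relation $p = m\,G\dot c$ and the bijection of integral curves with $\mathbf{B}$-geodesics are exactly the content of Proposition \ref{prop:lifting}, no exactness assumption on $\mathbf{B}$ is needed, and the argument goes through globally on $T^*D$.
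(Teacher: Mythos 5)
Your proof is correct and takes essentially the same route the paper intends: the paper states this corollary as an immediate consequence of Theorem \ref{thm:ndconf} together with the lifting correspondence of Proposition \ref{prop:lifting}, and you simply make that reduction explicit. The identification $p = m\,G\dot{c}$ along integral curves of $X_\mathbf{B}$ (so that the lift of a $\mathbf{B}$-geodesic is unique and maximality of $\gamma$ matches maximality of $c$) is exactly the right detail to pin down, and your remark that the argument must use the globally defined $X_\mathbf{B}$ rather than a potential-dependent $X_\mathbf{A}$ is well taken.
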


\begin{proposition}
For any closed 1-form $\sigma_\infty \in \Gamma(T^*\partial M)$, the space of magnetic fields with $\sigma_\infty$-blow up is a nontrivial open set in the space of closed 2-forms on $D$ with the uniform $C^1$ topology.
\end{proposition}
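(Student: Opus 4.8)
The plan is to prove the two halves of the statement separately: first that the set is nonempty (so that it is ``nontrivial''), and then that it is open in the uniform $C^1$ topology on the space of closed 2-forms on $D$.

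For nonemptiness I would exhibit an explicit closed 2-form on $D$ that realizes the blow-up model near $\partial M$. First I would record that the model $\mathbf{B}_\infty = f(n)\,dn\wedge\sigma$ is itself closed: since $\sigma = \pi_{\partial M}^*\sigma_\infty$ with $\sigma_\infty$ closed we have $d\sigma = 0$, and $d(f(n)\,dn) = f'(n)\,dn\wedge dn = 0$, so $d\mathbf{B}_\infty = 0$. Choosing $f$ with $\int_0^\varepsilon f = \pm\infty$ (for instance $f(n)=1/n$, with antiderivative $F(n)=\log n$ on $(0,\varepsilon]$), I would observe that $\mathbf{B}_\infty = d(F(n)\sigma)$, so the model is in fact exact over the collar with primitive $A_0 = F(n)\sigma$. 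The key step is to globalize this primitive rather than the form: taking a smooth cutoff $\chi(n)$ equal to $1$ for $n\le \varepsilon/3$ and to $0$ for $n\ge 2\varepsilon/3$, the $1$-form $A = \chi(n)F(n)\sigma$, extended by zero on $D\setminus\Omega_\varepsilon$, is smooth on all of $D$ (it vanishes identically in a neighborhood of the seam $\{n=\varepsilon\}$), and $\mathbf{B}=dA$ is a closed $2$-form equal to $\mathbf{B}_\infty$ on $\Omega_{\varepsilon/3}$. Shrinking the collar width to $\varepsilon/3$, the perturbation $\mathbf{B}_\text{per} = \mathbf{B}-\mathbf{B}_\infty$ vanishes there and is trivially $C^1$-bounded, so $\mathbf{B}$ has $\sigma_\infty$-blow up and the set is nonempty.

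For openness I would start from an arbitrary field $\mathbf{B}$ with $\sigma_\infty$-blow up, written over some collar $\Omega_\varepsilon$ as $\mathbf{B}=\mathbf{B}_\infty+\mathbf{B}_\text{per}$ with $\mathbf{B}_\text{per}$ a $C^1$-bounded closed $2$-form. Given any closed $2$-form $\mathbf{B}'$ on $D$ in the uniform $C^1$-ball of radius $\delta$ about $\mathbf{B}$, I would rewrite, over $\Omega_\varepsilon$, $\mathbf{B}' = \mathbf{B}_\infty + \bigl(\mathbf{B}_\text{per} + (\mathbf{B}'-\mathbf{B})\bigr)$. The bracketed term is closed, being a difference of closed forms, and both its $C^0$ norm and the norm of its covariant derivative are bounded by those of $\mathbf{B}_\text{per}$ increased by $\delta$; hence it is again a $C^1$-bounded perturbation relative to the \emph{same} model $\mathbf{B}_\infty$. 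Therefore $\mathbf{B}'$ has $\sigma_\infty$-blow up, and since this holds for the entire ball, the set is open.

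I expect the only genuine obstacle to sit in the nonemptiness step, namely producing a single globally defined closed $2$-form that reproduces the singular model near the boundary while staying smooth on $D$; the exactness of $\mathbf{B}_\infty$ over the collar is precisely what makes this painless, since cutting off a primitive rather than the form itself keeps the result closed automatically. The openness half is essentially formal: its entire content is that the sum of a $C^1$-bounded closed form with a uniformly $C^1$-bounded closed form is again $C^1$-bounded, so the defining condition is in fact stable under the whole ball (indeed any ball of finite radius), not merely under small perturbations.
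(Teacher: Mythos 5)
Your proof is correct, and its skeleton (exhibit one example, then observe that openness is formal because the $C^1$-bounded perturbation condition absorbs any uniformly $C^1$-small closed correction relative to the \emph{same} model $\mathbf{B}_\infty$) matches the paper's; the openness half is exactly the argument the paper dismisses with the words ``clear from the definition,'' and you spell it out correctly. Where you genuinely diverge is the nonemptiness construction. The paper never passes to a primitive: it simply chooses $f$ with $\int_0^\varepsilon f(n)\,dn = \pm\infty$ and $f(n)=0$ for $n>\delta$, and takes $\mathbf{B} = f(n)\,dn\wedge\sigma$, which then extends by zero over all of $D$. Closedness costs nothing there, because $d\bigl(\tilde f(n)\,dn\wedge\sigma\bigr) = d(\tilde f(n)\,dn)\wedge \sigma - \tilde f(n)\,dn\wedge d\sigma = 0$ for \emph{any} coefficient $\tilde f(n)$, using only $d\sigma = \pi_{\partial M}^*d\sigma_\infty = 0$. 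This shows that your closing heuristic --- that one must cut off the primitive rather than the form itself in order to stay closed --- is not accurate in this setting: cutting off at the level of $f$ (equivalently, multiplying the model by a cutoff $\chi(n)$) also yields a closed form, and that is the paper's route; indeed your own $dA = \bigl(\chi'(n)F(n)+\chi(n)f(n)\bigr)\,dn\wedge\sigma$ is precisely a form of this type. Your construction via the exact primitive $\chi(n)F(n)\sigma$ remains perfectly valid, and has the merit of generalizing to models without this special product structure, at the cost of the bookkeeping step you resolve by shrinking the collar to $\Omega_{\varepsilon/3}$; alternatively, no shrinking is needed if one observes that over the original $\Omega_\varepsilon$ your form equals $\tilde f(n)\,dn\wedge\sigma$ with $\tilde f = \chi'F+\chi f$, whose integral over $(0,\varepsilon]$ still diverges, so it is itself a field with simple $\sigma_\infty$ blow up with zero perturbation.
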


\begin{proof}
It is clear from the definition \ref{def:pert} that this set is open in the uniform $C^1$ topology it is only necessary to show that it is nontrivial.

For that, given $\sigma_\infty$, again let $\sigma = \pi_{\partial M}^*\sigma_\infty$ and choose any smooth function $f(n):(0,\varepsilon] \to \mathbb{R}$ with the property that $\int_0^\varepsilon f(n) dn = \pm\infty$ and such that there is some $0 < \delta < \varepsilon$ so that for $n>\delta$, $f(n)=0$. Define the 2-form
  \begin{equation*}
  \mathbf{B} = f(n)dn\wedge\sigma
  \end{equation*}

This form is closed, since:
  \begin{equation*}
  d\mathbf{B} = d(f(n)dn)\wedge\sigma - f(n)dn\wedge d(\sigma) = 0
  \end{equation*}
and it extends naturally to a closed 2-form defined on the whole $D$ which has $\sigma_\infty$-blow up.
\end{proof}


\subsection{Toroidal Domains}
\label{ssec:toroidal}

In order to confine every particle to the interior of our manifold using a magnetic force we see from theorem \ref{thm:ndmagtrap} that it is sufficient to find a nowhere vanishing closed 1-form over the boundary $\partial M$. This of course imposes a topological constraint on the manifold M.

\begin{definition}
We call $M$ a \emph{toroidal domain} if its boundary $\partial M$ carries a nowhere vanishing closed 1-form.
\end{definition}

We now describe the toroidal condition in a more geometric way. We will say that $\partial M$ is fibered over the circle if there is a submersion $s:\partial M \to S^1$. Notice that in this case $\partial M$ carries a non-vanishing closed 1-form, simply by considering the pullback $s^*d\theta$ of any non-vanishing 1-form $d\theta$ on the circle. Conversely, we have the following reformulation of a theorem of Tischler \cite{Ti}:

\begin{proposition}
A manifold $M$ is a toroidal domain, if and only if its boundary is fibered over the circle.
\end{proposition}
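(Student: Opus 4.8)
The forward implication is already established in the discussion preceding the statement: a submersion $s:\partial M \to S^1$ pulls back a nowhere-vanishing closed 1-form $d\theta$ on $S^1$ to the nowhere-vanishing closed 1-form $s^*d\theta$ on $\partial M$. The content of the proposition is therefore the converse, which is precisely the classical theorem of Tischler \cite{Ti}. I note at the outset that, because the definition above asks only for a \emph{submersion} $\partial M \to S^1$ rather than a locally trivial fibration, I will not need to invoke Ehresmann's theorem; it suffices to produce the submersion.

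So suppose $\omega$ is a nowhere-vanishing closed 1-form on the compact boundary $\partial M$, with class $[\omega] \in H^1(\partial M;\mathbb{R}) \cong H^1_{\mathrm{dR}}(\partial M)$. The plan is to perturb $\omega$ to a closed 1-form with integer periods and then integrate it to a circle-valued map. The enabling algebraic fact is that the image of $H^1(\partial M;\mathbb{Z})$ in the finite-dimensional space $H^1(\partial M;\mathbb{R})$ is a full-rank lattice, so the classes with rational coordinates are dense. Fixing closed forms $\eta_1,\ldots,\eta_k$ representing a basis of this lattice and writing $[\omega]=\sum_i a_i[\eta_i]$, I would replace $\omega$ by $\omega'=\omega+\sum_i(a_i'-a_i)\eta_i$ with each $a_i'\in\mathbb{Q}$ chosen close to $a_i$; then $[\omega']$ has rational periods, while $\|\omega-\omega'\|_{C^0}$ can be made arbitrarily small.

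The crucial point is that on the compact manifold $\partial M$ the property of being nowhere vanishing is open in the $C^0$ topology: since $|\omega|$ attains a positive minimum $\delta>0$, any $\omega'$ with $\sup_x|\omega_x-\omega'_x|<\delta$ is again nowhere vanishing. Choosing the rational approximation within this tolerance, $\omega'$ is simultaneously nowhere vanishing and of rational periods. After scaling $\omega'$ by a suitable positive integer so that $\int_\gamma\omega'\in\mathbb{Z}$ for every loop $\gamma$, I define
\begin{equation*}
s:\partial M \to \mathbb{R}/\mathbb{Z} \cong S^1, \qquad s(x)=\int_{x_0}^{x}\omega' \pmod{\mathbb{Z}},
\end{equation*}
which is well-defined independently of the path from a fixed basepoint $x_0$ exactly because the periods are integers. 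Since $ds=\omega'$ is nowhere vanishing, $s$ is a submersion, exhibiting $\partial M$ as fibered over the circle.

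The main obstacle is this perturbation step, where one must move the cohomology class into the rational lattice while keeping the representative nowhere vanishing. Both requirements rest on compactness of $\partial M$: it makes $H^1_{\mathrm{dR}}(\partial M)$ finite-dimensional, so that a small change of class is realized by a uniformly small change of form through a fixed finite set of representatives, and it makes the nowhere-vanishing condition $C^0$-open. This is precisely why the hypothesis is imposed on the closed manifold $\partial M$.
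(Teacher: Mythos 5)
Your proof is correct and follows essentially the same route as the paper's: both are Tischler's perturbation argument, replacing the given nowhere-vanishing closed form by a $C^0$-close one with rational periods (using finitely many closed 1-forms spanning the integral classes), scaling to integer periods, and integrating from a basepoint to obtain a circle-valued submersion. The only differences are bookkeeping ones --- you work with a lattice basis in $H^1(\partial M;\mathbb{R})$ and reduce mod $\mathbb{Z}$, while the paper works with a basis of $H_1^f(\partial M;\mathbb{Z})$ and dual forms and reduces mod the period group $H$ --- and you make explicit the $C^0$-openness of the nowhere-vanishing condition, which the paper only asserts.
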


\begin{proof} Let $\sigma_\infty$ be a nowhere vanishing 1-form on $\partial M$, we want to construct a submersion from $\partial M$ to $S^1$. Let $H_1(\partial M,\mathbb{Z})$ be the first singular homology group of $\partial M$, which is finitely generated since $\partial M$ is compact.

Let $\text{Tor}(H_1(\partial M,\mathbb{Z}))$ be its subgroup of torsion elements and denote
  \begin{equation*}
  H^f_1(\partial M, \mathbb{Z}) = H_1(\partial M, \mathbb{Z})/\text{Tor}(H_1(\partial M, \mathbb{Z})).
  \end{equation*}

This is a free $\mathbb{Z}$-module and so we may choose a collection $c_1, \ldots , c_k$ of closed 1-cycles that form a $\mathbb{Z}$-basis for it. Choose then a collection of closed 1-forms $\omega^1, \ldots , \omega^k$ dual to the $c_i$'s so that:
  \begin{equation*}
  \int_{c_i}\omega^j = \delta_i^j
  \end{equation*}
where $\delta_i^j$ denotes the Kronecker delta, $\delta_i^i = 1$ and $\delta_i^j = 0$ if $i \neq j$.

Now given numbers $e_1, \ldots, e_k$ consider the 1-form:
  \begin{equation*}
  \sigma_e = \sigma + e_i\omega^i
  \end{equation*}

Its periods over the basis of $H^f_1(\partial M, \mathbb{Z})$ are:
  \begin{equation*}
  \int_{c_i}\sigma_e = \left( \int_{c_i}\sigma \right) + e_i
  \end{equation*}

We may then choose the $e_i$'s small enough so that $\sigma_e$ is still non-vanishing and such that all these periods are rational numbers. By multiplying $\sigma_e$ by a large enough integer $N >> 0$ we obtain a closed 1-form $\sigma^* = N\sigma_e$ that is still non-vanishing and such that all of its periods over the $c_i$'s are in fact integers.

Denote by
  \begin{equation*}
  H = \left\{ \left. \int_c\sigma^* \right| c \text{ is any closed 1-cycle in }\partial M \right\}
  \end{equation*}

The set $H$ forms a discrete subgroup of $\mathbb{R}$ (it is contained in $\mathbb{Z}$) so $\mathbb{R}/H \cong S^1$. Now, fix a base point $p_0 \in \partial M$ and consider the map
  \begin{equation*}
  \begin{array}{rccc}
  s:	& \partial M	& \to			& \mathbb{R}/H\\
  	& p	& \mapsto	& \displaystyle \int_{p_0}^p \sigma^* \mod H
  \end{array}
  \end{equation*}

The integral on the right is defined along any path connecting $p_0$ and $p$, its values can only differ by an element of $H$ so $s$ is a well defined map.

Finally a straightforward computation in local charts using straight lines for paths allows one to prove it is also a submersion.
\end{proof}

\textbf{Remark:} Notice in particular that by the Poincar\'{e}-Hopf theorem if $M$ is toroidal then the Euler characteristic of the boundary must vanish.


\section{Examples}
\label{sec:examples}

Before getting into the proofs of the main theorems, let's discuss some examples where this result can be applied. Let us start by analyzing some examples of toroidal domains in different dimensions.

\subsection{Surfaces} In dimension 2, any surface with non-empty boundary is a toroidal domain (since their boundary is simply a union of disjoint circles). Consider for example the unit disc in $\mathbb{R}^2$; according to the theorem if we choose a magnetic field $\mathbf{B} = B(x,y) dx \wedge dy$ that has the form:
  \begin{equation*}
  B(x,y) = f(r) + b(x,y)
  \end{equation*}
where $b(x,y)$ is any smooth function with bounded derivative defined over the unit disc and $f(r)$ is a function of the radius alone satisfying $\int_{1-\varepsilon}^{1} rf(r) dr = \pm\infty$, then every $\mathbf{B}$-geodesic is confined to the interior of the unit disc for all time.

\subsection{3-dimensional Solid Tori} In dimension three a toroidal domain must have its boundary consisting of a disjoint union of tori. Consider the interior of a torus in $\mathbb{R}^3$. Let $\mathbb{T} = \mathbb{R}/\mathbb{Z}$ and assume we have $x: \mathbb{T}^2 \to \partial D$ global principal coordinates for the boundary of $D$, by that we mean that $\partial_\theta x$ and $\partial_\phi x$ are eigenvectors of the shape operator (these are available in many interesting examples). As before, we define coordinates $z: [0,\varepsilon) \times \mathbb{T}^2 \to D$ near the boundary by:
  \begin{equation}
  z(n, \theta, \phi) = x(\theta, \phi) + nN(\theta, \phi),
  \end{equation}
where $N(\theta, \phi)$ is the inward normal vector. Let $g=z^*g_\text{euc}$ be the Euclidian metric in these coordinates, then any magnetic field of the form:
  \begin{equation}
  \vec{B} = f(n)\vec{X} + \vec{B}_b,
  \end{equation}
where the vector field $\vec{X}$ has the form $\vec{X} = \frac{1}{\sqrt{\det g}} (a\partial_\theta x + b\partial_\phi x)$ for constants $a,b \in \mathbb{R}$ not both equal to zero, $\vec{B}_b$ is a smooth magnetic field defined on the interior which is bounded with bounded derivatives and $f(n)$ is a funcion satisfying $\int_0^\varepsilon f(n)dn = \pm\infty$ for some $\varepsilon > 0$ small enough, is a confining magnetic field.

\subsection{Tubular Neighborhoods} For higher dimensional examples one may consider any manifold of the form $M = X \times S^1$ where $X$ is a compact oriented manifold with boundary. Since such manifolds are clearly toroidal we deduce that there are confining magnetic fields defined on $D = M \setminus \partial M$.

In particular if one considers a closed simple curve $C$ inside some given orientable manifold of dimension $n$ and take M to be a closed tubular neighborhood of this curve, then $M$ is diffeomorphic to an orientable $(n-1)$-disc bundle over $S^1$, since over the circle the orientability of a bundle implies its triviality this disc bundle must be trivial, so we may deduce that $M \cong D^{n-1} \times S^1$ and $M$ is therefore toroidal. 

\subsection{Flat Circle Bundles} In the same spirit as the previous example we may also consider a base manifold $X$ with boundary and take $M$ to be any circle bundle over $X$ with a flat connection $\alpha$. The connection 1-form is closed (since it is flat) and never-vanishes, so $M$ is a toroidal domain. In this case the 2-form $B_\infty$ near the boundary can be written as
  \begin{equation*}
  \mathbf{B}_\infty = f(n)dn\wedge\alpha
  \end{equation*}
where $\alpha$ denotes the connection 1-form.

Our theorem then implies that any flat circle bundle over a manifold with boundary carries confining magnetic fields.

\textbf{Remark:} Recall that since the holonomy of a loop is homotopy invariant on a flat bundle, there is a correspondence between $S^1$-bundles with a flat connection $(M,\alpha)$ over a compact base $X$ and representations $\pi_1(X) \to U(1)$.

\subsection{Log-Symplectic Magnetic Fields}

In this section we describe a class of examples of magnetic fields with $\sigma_\infty$-blow up which are symplectic in the interior $D$. These magnetic fields arise naturally from a special class of Poisson manifolds called log-symplectic manifolds. We first recall some basic definitions.

Given $\alpha \in \bigoplus^k TM$, write $\alpha = (a_1, \ldots, a_k)$ and denote by $\overline{\alpha} = a_1 \wedge \cdots \wedge a_k$ the corresponding homogeneous $k$-vector field. Denote the $i$-th deletion of $\alpha$ by
\begin{equation*}
D_i(\alpha) = (a_1, \ldots, a_{i-1}, a_{i+1}, \ldots, a_k).
\end{equation*}
\begin{definition}
The Schouten-Nijenhuis bracket $[\cdot ,\cdot ]: \medwedge^\bullet TM \times \wedge^\bullet TM \to \wedge^\bullet TM$ of multivector fields is uniquely defined by its action on homogeneous elements. Given $\alpha \in \bigoplus^k TM$ and $\beta \in \bigoplus^l TM$ we define:
\begin{equation*}
[\overline{\alpha},\overline{\beta}]=\sum_{i,j}(-1)^{i+j}[a_i,b_j] \wedge \overline{D_i(\alpha)} \wedge \overline{D_j(\beta)}
\end{equation*}
where $[a_i,b_j]$ denotes the Lie bracket of vector fields.
\end{definition}

\begin{definition}
A Poisson structure on a manifold $M$ is a bi-vector field $\pi \in \Gamma( \medwedge^2 TM )$ satisfying the Jacobi identity $[\pi, \pi] = 0$.
\end{definition}

Given a Poisson manifold $(M,\pi)$ we have a map $\Pi: T^*M \to TM$ given by $\Pi(\lambda) = \pi(\lambda, \cdot)$. If the manifold $M$ has dimension $2n$ we may consider the $2n$-vector field $\pi^n = \wedge^n\pi \in \Gamma(\medwedge^{2n} TM)$. We then define the \textit{singular locus} of $\pi$ by:
\begin{equation*}
Z = \{p \in M \mid \pi^n_p = 0\}
\end{equation*}

We also call the complement $D = M \setminus Z$ the \textit{symplectic locus} of $\pi$. Over the symplectic locus the map $\Pi$ is invertible and we may use its inverse to define the symplectic form $\omega(v,w) = \left( \Pi^{-1}(v) \right) (w)$. The fact that this form is closed is a consequence of the Jacobi identity $[\pi,\pi] = 0$. We sometimes denote $\omega = \pi^{-1}$.

\begin{definition}
A log-symplectic manifold is an even dimensional Poisson manifold $(M,\pi)$ such that $\pi^n$ only has nondegenerate zeroes. That is the section $\pi^n$ of the line bundle $\medwedge^{2n}TM$ is transversal to the zero section.
\end{definition}

The nondegeneracy of $\pi^n$ implies that the singular locus $Z$ is an oriented hypersurface of $M$ and the symplectic locus $D$ is a dense open subset. For more information on log-symplectic manifolds and many examples see \cite{Cav}, for the structure theory see \cite{GMP}.

We now consider the magnetic field $\mathbf{B} = \pi^{-1}$ defined on $D$. The situation here is analogous to the case of a manifold $M$ with boundary, except now the singular locus $Z$ plays the role of the boundary $\partial M$. We will describe how Theorem \ref{thm:ndconf} can be applied in this case to show that a $\mathbf{B}$-geodesic in $D$ may never reach the singular locus in finite time. Notice that the vanishing of $\pi^n$ along $Z$, translates to the blowing up of $\mathbf{B}$ along the singular locus.

Let $(M,g,\pi)$ be a compact orientable Riemannian log-symplectic manifold and let $NZ = TZ^\perp$ be the normal bundle of $Z$. Since both $M$ and $Z$ are orientable the normal bundle $NZ$ must be trivial. Fix a unit normal vector $\nu \in \Gamma(NZ)$ and denote by $n$ the induced fiber coordinate. We have the following local form on a neighborhood of the singular locus (see Theorem 3.2 in \cite{Cav} see also \cite{GMP}):

\begin{theorem}
Let $(M,g,\pi)$ be a compact orientable Riemannian log-symplectic manifold. There is a nowhere vanishing closed 1-form $\sigma \in \Gamma(T^*Z)$ and a closed 2-form $\beta \in \Gamma(\medwedge^2 T^*Z)$ such that $\sigma\wedge\beta^{n-1} \neq 0$.

Furthermore there is a neighborhood of $Z$ in $M$ which is symplectomorphic to a neighborhood of the zero section of $NZ$ with symplectic form $d(\log|n|)\wedge\overline{\sigma} + \overline{\beta}$, where we denote by $\pi_Z:NZ \to Z$ the base point projection and define $\overline{\sigma} = \pi_Z^*\sigma$ and $\overline{\beta} = \pi_Z^*\beta$.
\end{theorem}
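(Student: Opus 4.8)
The plan is to read $\omega = \pi^{-1}$ as a \emph{b-symplectic} form, i.e.\ a closed nondegenerate section of $\medwedge^2 \, {}^bT^*M$, and to obtain the normal form through a Moser deformation argument performed entirely in the b-category. First I would check that $Z$ is a smooth orientable hypersurface: transversality of the section $\pi^n$ to the zero section of $\medwedge^{2n}TM$ makes $Z = (\pi^n)^{-1}(0)$ a codimension-one submanifold by the regular value theorem, and the triviality of $NZ$ lets me fix the unit normal $\nu$, the fibre coordinate $n$, and a tubular neighborhood diffeomorphism of a neighborhood of the zero section of $NZ$ onto a neighborhood of $Z$ on which $n$ is a global defining function. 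The key structural observation I would record is that the nondegeneracy hypothesis on $\pi^n$ is precisely the condition for $\pi$ to extend to a nondegenerate Poisson structure on the b-tangent bundle ${}^bTM$, whose sections are the vector fields tangent to $Z$; dually, $\omega$ extends from $D$ to a closed, nondegenerate section of $\medwedge^2 \, {}^bT^*M$, i.e.\ a b-symplectic form, with a simple logarithmic pole along $Z$.

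Next I would extract $\sigma$ and $\beta$ as the residue data of this b-form. Writing $\omega = \frac{dn}{n}\wedge a + b$ near $Z$ for smooth $a \in \Omega^1$ and $b \in \Omega^2$ on the tubular neighborhood, I set $\sigma = i_Z^* a$ and $\beta = i_Z^* b$, where $i_Z : Z \hookrightarrow M$ is the inclusion. Expanding $d\omega = 0$ in the b-de Rham complex and separating the polar part forces $d\sigma = 0$ and $d\beta = 0$, so both are closed; moreover $\sigma$ is intrinsic, being the residue of $\omega$ along $Z$, and hence independent of the chosen defining function. Taking the top exterior power gives
\begin{equation*}
\omega^n = \frac{dn}{n}\wedge \pi_Z^*\!\left(\sigma\wedge\beta^{n-1}\right) + (\text{terms regular at } Z),
\end{equation*}
so nondegeneracy of $\omega$ as a b-form at points of $Z$ is equivalent to $\sigma\wedge\beta^{n-1}$ being a nonvanishing top form on $Z$. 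This single condition simultaneously yields the cosymplectic identity $\sigma\wedge\beta^{n-1}\neq 0$ and the nowhere vanishing of $\sigma$, establishing the first assertion of the theorem.

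With the model $\omega_0 = \frac{dn}{n}\wedge\overline{\sigma} + \overline{\beta}$ on a neighborhood of the zero section of $NZ$ in hand, I would finish by a relative b-Moser argument. Both $\omega$ and $\omega_0$ are b-symplectic, they have the same residue $\sigma$ and the same restriction $\beta$ along $Z$; the Mazzeo--Melrose description of b-cohomology together with a relative Poincar\'e lemma then shows that $\omega - \omega_0$ is b-exact near $Z$ with a primitive $\mu$ vanishing along $Z$. Setting $\omega_t = \omega_0 + t(\omega - \omega_0)$, which stays b-nondegenerate near $Z$, I would solve $\iota_{X_t}\omega_t = -\mu$ for a time-dependent b-vector field $X_t$ and integrate its flow; because $X_t$ is tangent to $Z$ and vanishes there, its time-one flow is a genuine diffeomorphism fixing $Z$ and pulling $\omega$ back to $\omega_0$, which is the desired symplectomorphism.

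The main obstacle I anticipate is this final Moser step in the b-category rather than the algebra of the residue. Two points require care: showing that the difference of the two b-symplectic forms is b-exact with a primitive vanishing to the correct order along $Z$ (this is where the Mazzeo--Melrose isomorphism $H^k_b \cong H^k(Z)\oplus H^{k-1}(Z)$ and a relative Poincar\'e lemma for b-forms enter), and verifying that the Moser vector field obtained by contracting against the nondegenerate b-form $\omega_t$ is an honest smooth b-vector field tangent to $Z$, so that its flow exists for all $t\in[0,1]$ on a neighborhood of $Z$ and preserves the hypersurface. Once these are secured the remaining computations are routine, and the construction can be cross-checked against the local coordinates in which $\pi = n\,\partial_n\wedge\partial_\theta + \sum_i \partial_{p_i}\wedge\partial_{q_i}$, with $\sigma = d\theta$ and $\beta = \sum_i dp_i\wedge dq_i$.
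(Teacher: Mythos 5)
This theorem is not proved in the paper at all: it is quoted from the literature (Theorem 3.2 of \cite{Cav}, with the underlying normal-form theorem due to Guillemin--Miranda--Pires \cite{GMP}), so there is no in-paper argument to compare against. Your proposal correctly reconstructs the standard proof from those references --- reading $\pi^{-1}$ as a b-symplectic form, extracting the residue data $(\sigma,\beta)$ and deducing $d\sigma = d\beta = 0$ and $\sigma\wedge\beta^{n-1}\neq 0$ from closedness and b-nondegeneracy, then concluding with a relative b-Moser argument whose primitive vanishes along $Z$ --- so it matches the cited proof in all essentials.
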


This means that close to $Z$ the magnetic field $\mathbf{B}$ has the form:
\begin{equation*}
\mathbf{B} = \frac{1}{|n|}dn\wedge\sigma + \beta
\end{equation*}
Which is a magnetic field of $\sigma$-blow up, since $1/|n|$ is non-integrable as $n \to 0$ and $\beta$ is a $C^\infty$-bounded 2-form, which is an admissible perturbation. We then obtain the following:

\begin{corollary}
Let $(M,g,\pi)$ be a compact orientable Riemannian log-symplectic manifold with magnetic field $\mathbf{B} = \pi^{-1}$. Any $\mathbf{B}$-geodesic in $D$ is defined for all time and never reaches the singular locus $Z$.
\end{corollary}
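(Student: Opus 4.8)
The plan is to reduce the claim to a direct application of Theorem \ref{thm:ndconf}, after recasting the symplectic locus $D$ as the interior of a compact manifold with boundary whose boundary consists of (two copies of) the singular locus $Z$. All of the dynamical content is then inherited from the earlier theorem; the only genuine work is the geometric reduction.

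First I would extract the local data from the quoted structure theorem. Since $(M,g,\pi)$ is compact and orientable, $Z$ is a compact oriented hypersurface, and because $M$ and $Z$ are both orientable the normal bundle $NZ$ is trivial. Fixing the unit normal $\nu$ and the induced signed coordinate $n$, a tubular neighborhood of $Z$ is identified with $Z \times (-\varepsilon,\varepsilon)$, on which $\mathbf{B} = \pi^{-1}$ has the form $\frac{1}{|n|}\,dn \wedge \overline{\sigma} + \overline{\beta}$, where $\sigma$ is a nowhere vanishing closed $1$-form on $Z$ and $\beta$ is a smooth closed $2$-form, with $\overline{\sigma} = \pi_Z^*\sigma$ and $\overline{\beta} = \pi_Z^*\beta$.

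Next I would cut $M$ along $Z$. Removing $Z$ splits the tubular neighborhood into the two half-collars $\{0<n<\varepsilon\}$ and $\{-\varepsilon<n<0\}$; adjoining to each its own copy of $Z$ produces a compact oriented manifold with boundary $\overline{M}$, with $\partial\overline{M} = Z^+ \sqcup Z^-$ and interior exactly $D = M \setminus Z$ --- here the triviality of $NZ$ is what supplies the globally defined two-sided collar. On the $Z^+$ side the boundary distance function is $n$ and in Fermi normal coordinates the metric takes the warped-product form $g = dn^2 + g_Z(n)$ demanded in the collar; on the $Z^-$ side the same holds with $-n$ as the distance. I would then verify the hypotheses of Theorem \ref{thm:ndconf}: writing $f$ for the blow-up coefficient as a function of the distance to the relevant copy of $Z$, on each half-collar $\mathbf{B} = f\,dn\wedge\overline{\sigma} + \overline{\beta}$ with $\int_0^\varepsilon f = \pm\infty$, since $1/|n|$ is non-integrable at $n=0$, so $f\,dn\wedge\overline{\sigma}$ is a field of simple $\sigma$-blow up. The term $\overline{\beta}$ is smooth and, the closed collar being compact, it is bounded together with its covariant derivative, hence an admissible $C^1$-bounded perturbation in the sense of Definition \ref{def:pert}. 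Since $\sigma$ plays the role of a nowhere vanishing $\sigma_\infty$ on each boundary component, Theorem \ref{thm:ndconf} applies to $\overline{M}$ and gives that every $\mathbf{B}$-geodesic in $D$ is defined for all time and can only approach $\partial\overline{M} = Z^+\sqcup Z^-$, i.e.\ the singular locus, in infinite time, which is the assertion.

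The hard part is the cutting construction, namely confirming that $D = M \setminus Z$ is honestly the interior of a compact manifold with boundary in which each side of $Z$ is a bona fide boundary component equipped with the collar coordinates required by the earlier results. This is exactly where orientability is used: triviality of $NZ$ yields the two-sided collar, so the signed coordinate $n$ restricts to a legitimate boundary-defining function on each side and the structure theorem's local form of $\mathbf{B}$ genuinely matches a $\sigma$-blow up field. Were $Z$ one-sided this reduction would fail, so this hypothesis cannot be dropped.
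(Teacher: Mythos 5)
Your proposal is correct and takes essentially the same route as the paper: invoke the log-symplectic structure theorem to write $\mathbf{B}$ near $Z$ as $\frac{1}{|n|}dn\wedge\overline{\sigma}+\overline{\beta}$, recognize this as a field with $\sigma$-blow up for the nowhere vanishing closed form $\sigma$ (non-integrability of $1/|n|$ plus the $C^1$-bounded perturbation $\overline{\beta}$), and apply Theorem \ref{thm:ndconf} with the singular locus playing the role of $\partial M$. The only difference is that you spell out the cutting of $M$ along $Z$ into a compact manifold with boundary $Z^+\sqcup Z^-$ and interior $D$, a step the paper leaves implicit in its remark that ``the singular locus $Z$ plays the role of the boundary $\partial M$,'' so your write-up is a slightly more careful version of the same argument.
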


\subsection{3d Ball}

We now consider an example where not every particle is confined to the interior of $D$. Let $M$ be the unit closed 3d ball $ M = \{ q\in\mathbb{R}^3 \mid |q|\leq 1 \}$ and consider the height function defined on the boundary $\partial M = S^2$:
\begin{equation*}
h: S^2 \to \mathbb{R},\quad h(x,y,z) = z
\end{equation*}

We take $\sigma_\infty = dh \in \Gamma(T^*S^2)$, and compute the zero locus $Z(\sigma_\infty) = \{(0, 0, \pm1)\}$. We then consider a magnetic field with simple $\sigma_\infty$ blow up:
\begin{equation*}
\mathbf{B} = f(n)dn\wedge\sigma
\end{equation*}

Theorem \ref{thm:ndmagtrap} implies that particles may only escape $D$ through the north or south pole and in this case we can see that the trajectory:
\begin{equation*}
c(t) = (0, 0, t),\quad -1 \leq t \leq 1
\end{equation*}
is in fact a $\mathbf{B}$-geodesic, since it is a normal geodesic of the ball whose velocity is in the kernel of $\mathbf{B}$, and escapes $D$ in finite time both in its future and past.


\section{Proof of the Main Theorem}
\label{sec:proof}

We prove Theorem \ref{thm:ndmagtrap} by contradiction. Let $\mathbf{B}$ be a magnetic field with $\sigma_\infty$-blow up for some 1-form $\sigma_\infty \in \Gamma(T^*\partial M)$ and assume there is some $\mathbf{B}$-geodesic $c:I \to D$ with $I \subset \mathbb{R}$ its maximal domain of definition. Suppose $T^\infty = \sup(I) \neq \infty$, by lemma \ref{lem:lim} we know that there is a limit $x^\infty = \lim_{t \nearrow T^\infty}c(t) \in \partial M$ which we assume by contradiction does not lie in $Z(\sigma_\infty)$.

Before proceeding, we must choose a chart of $M$ at $x^\infty$, $\mathbf{q}: \overline{U} \to \mathbb{R}^d$ which is adapted to $\sigma_\infty$, we also denote $U = \overline{U} \cap D$.

\begin{lemma}\label{lem:coord}
Let $x^\infty \in \partial M \setminus Z(\sigma_\infty)$. There is a chart of $M$ at $x^\infty$, $\mathbf{q}: \overline{U} \to \mathbb{R}^d$ satisfying:

\begin{enumerate}
	\item $U \subset \Omega_\varepsilon$.
	
	\item  If we denote $\mathbf{q}(u) = (q^1(u), \ldots, q^d(u))$, then $q^1(u) = n(u) = \text{dist}(u,\partial M)$ and denoting $\theta(u) = q^2(u)$ we have: $d\theta = \sigma$.
	
	\item $q(U) = \{q\in\mathbb{R}^d \mid |q|<r,\; q^1>0\}$ for some $r>0$.
\end{enumerate}
\end{lemma}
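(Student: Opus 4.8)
The plan is to build the chart in two stages: first produce coordinates on $\partial M$ near $x^\infty$ whose leading coordinate is a local primitive of $\sigma_\infty$, then lift these to the collar $\Omega_\varepsilon$ via the normal coordinates $\phi$ and adjoin the distance function as the normal coordinate. For the first stage, since $\sigma_\infty$ is closed and $x^\infty \notin Z(\sigma_\infty)$ means precisely that $\sigma_\infty$ does not vanish at $x^\infty$, the Poincar\'e lemma yields a neighborhood $V \subset \partial M$ of $x^\infty$ and a smooth function $\theta_\infty : V \to \mathbb{R}$ with $d\theta_\infty = \sigma_\infty$ on $V$. Because $d\theta_\infty(x^\infty) = \sigma_\infty(x^\infty) \neq 0$, the function $\theta_\infty$ is a submersion near $x^\infty$, so after shrinking $V$ I can complete it to a coordinate system $(\theta_\infty, y^3, \ldots, y^d)$ on $V$ by the implicit function theorem, normalized so that all of these vanish at $x^\infty$. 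Note this uses only the nonvanishing of $\sigma_\infty$, which is exactly the hypothesis $x^\infty \in \partial M \setminus Z(\sigma_\infty)$.

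For the second stage, I pull these functions back to the collar by the projection $\pi_{\partial M} : \Omega_\varepsilon \to \partial M$, setting $q^2 = \theta := \pi_{\partial M}^*\theta_\infty$ and $q^j := \pi_{\partial M}^* y^j$ for $3 \le j \le d$, and taking $q^1 := n$, the distance to the boundary, which is smooth on $\Omega_\varepsilon$ by (\ref{eq:distance}). The required identity in (2) is then immediate from naturality of pullback and $d$: one has $d\theta = d(\pi_{\partial M}^*\theta_\infty) = \pi_{\partial M}^*(d\theta_\infty) = \pi_{\partial M}^*\sigma_\infty = \sigma$, and crucially this holds throughout the collar chart, not merely along $\partial M$.

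It remains to check that $\mathbf{q} = (n, \theta, q^3, \ldots, q^d)$ is genuinely a chart near $x^\infty$ and to arrange the normalizations (1) and (3). For the chart property I use the warped-product splitting $T\Omega_\varepsilon \cong \mathbb{R}\partial_n \oplus T_\varepsilon\partial M$ from the Remark: the differentials $d\theta, dq^3, \ldots, dq^d$ all annihilate $\partial_n$, being pullbacks by $\pi_{\partial M}$ whose kernel is $\mathbb{R}\partial_n$, and on each slice $\{n = \mathrm{const}\}$ they restrict to the independent differentials $d\theta_\infty, dy^3, \ldots, dy^d$; meanwhile $dn(\partial_n) = 1$. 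Hence $dn, d\theta, dq^3, \ldots, dq^d$ are linearly independent at $x^\infty$, the Jacobian is nonsingular, and the inverse function theorem makes $\mathbf{q}$ a diffeomorphism onto its image on a small neighborhood $\overline{U}$ of $x^\infty$. Shrinking $\overline{U}$ if necessary places its interior inside the collar $\Omega_\varepsilon$, giving (1); since $q^1 = n = 0$ on $\partial M$ while $n > 0$ on $D$, the interior $U = \overline{U} \cap D$ lands in $\{q^1 > 0\}$, and because $\mathbf{q}(x^\infty) = 0$ I can further shrink so that the image is exactly a half-ball $\{\,|q| < r,\ q^1 > 0\,\}$, giving (3).

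The only substantive point is the combination of local exactness (Poincar\'e lemma) with the nonvanishing hypothesis, which together promote $\sigma_\infty$ to a bona fide coordinate $\theta_\infty$; this is where the exclusion of $Z(\sigma_\infty)$ is essential, and it is the step I would state most carefully. Everything after that—the lift by $\pi_{\partial M}$, the identity $d\theta = \sigma$, and the Jacobian computation—is routine bookkeeping with the collar structure established in the Preliminaries, so I do not expect any genuine obstruction there beyond keeping track of which differentials kill $\partial_n$.
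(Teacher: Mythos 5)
Your proof is correct, but it takes a genuinely different route from the paper's. The paper works directly on the collar: it picks $\overline{U}$ small enough to admit a primitive $\theta$ of $\sigma$ (Poincar\'e lemma), and then produces the remaining coordinates $q^3,\ldots,q^d$ by invoking Frobenius --- the distribution $E=\ker dn\cap\ker\sigma$ is integrable because both 1-forms are closed, and the coordinates are chosen so that the leaves of $E$ are the common level sets of $n$ and $\theta$. You instead work on the boundary: you take a primitive $\theta_\infty$ of $\sigma_\infty$ near $x^\infty$, use the nonvanishing of $\sigma_\infty$ at $x^\infty$ to complete $\theta_\infty$ to a chart $(\theta_\infty,y^3,\ldots,y^d)$ of $\partial M$, and then pull everything back through the collar projection $\pi_{\partial M}$, adjoining $n$ as the first coordinate. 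Your route avoids Frobenius entirely --- the product structure $\Omega_\varepsilon\cong(0,\varepsilon)\times\partial M$ does the work that integrability of $E$ does in the paper --- and it makes the identity $d\theta=\pi_{\partial M}^*d\theta_\infty=\pi_{\partial M}^*\sigma_\infty=\sigma$ on the whole collar an immediate consequence of naturality of pullback rather than a property of a chosen primitive. The trade-off: the paper's argument is more intrinsic (it never needs to fix a chart of $\partial M$), while yours is more elementary and pinpoints exactly where the hypothesis $x^\infty\notin Z(\sigma_\infty)$ enters (namely, in promoting $\theta_\infty$ to a coordinate); the paper uses that hypothesis only implicitly, since without it $E$ would not have constant rank $d-2$ and $\theta$ could not serve as a coordinate. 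Both proofs conclude with the same shrinking argument to normalize the image to a half-ball, so your condition (3) step matches the paper's.
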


\begin{proof}
We start by choosing $\overline{U}$ small enough so that the function $n(u) = \text{dist}(u,\partial M)$ is smooth and so that we can find a primitive $\theta:\overline{U} \to \mathbb{R}$ for $\sigma$, that is $d\theta = \sigma$.

Next we choose the remaining coordinates by noticing that the distribution
\begin{equation*}
E = \ker dn \cap \ker \sigma \leq T\overline{U}
\end{equation*}
is integrable since both 1-forms are closed. In order to see this notice that if $\alpha$ is a 1-form and $X,Y$ are local vector fields in $\ker\alpha$, the formula:
\begin{equation*}
d\alpha(X,Y) = X\alpha(Y) - Y\alpha(X) - \alpha([X,Y])
\end{equation*}
reduces to $d\alpha(X,Y) = -\alpha([X,Y])$ and if $\alpha$ is closed, so that $d\alpha = 0$ we then have $\alpha([X,Y]) = 0$ which means that $\ker\alpha$ is integrable.

We then choose the remaining coordinates $q^3, \ldots, q^d$ so that the leaves of the distribution $E$ are defined by setting $n$ and $\theta$ to be constant.

Finally by making $\overline{U}$ possibly a bit smaller we can obtain condition 3. which finishes our proof.
\end{proof}

By translating the time parameter we may focus on the tail end of the curve $c(t)$ and assume that it is defined for $0 \leq t < \tau^\infty$ with $x^\infty = \lim_{t\to\tau^\infty}c(t)$ and that $c(t)$ is completely contained in the open set $U$.

In these coordinates a magnetic field with $\sigma_\infty$-blow up looks like:
  \begin{equation*}
  \mathbf{B} = f(n)dn \wedge d\theta + \mathbf{B}_\text{per}
  \end{equation*}

We may also choose a convenient magnetic potential $\mathbf{A}$ for $\mathbf{B}$. In order to do that, define $F(n)$ by
  \begin{equation}
  F(n) = -\int_n^\varepsilon f(m) dm
  \end{equation}

Since $F(n)$ is an antiderivative of $f(n)$ we may choose:
  \begin{equation}
  \mathbf{A} = F(n)d\theta + \mathbf{A}_\text{per}
  \end{equation}
where $\mathbf{A}_\text{per}$ is a smooth 1-form defined over the domain of the chart with $d\mathbf{A}_\text{per} = \mathbf{B}_\text{per}$. For our estimates we will need to choose a primitive $\mathbf{A}_\text{per}$ which is itself $C^1$-bounded. We show in the following lemma that such a choice is possible.

\begin{lemma} There is a primitive $\mathbf{A}_\text{per} = a_idq^i$ of $\mathbf{B}_\text{per}$ defined over the chart $q:U \to \mathbb{R}^d$ which is $C^1$-bounded.
\end{lemma}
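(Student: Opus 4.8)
The plan is to build $\mathbf{A}_\text{per}$ by hand from the explicit Poincar\'{e} homotopy operator, taking advantage of the fact that the coordinate image $\mathbf{q}(U) = \{q \in \mathbb{R}^d \mid |q| < r,\ q^1 > 0\}$ from Lemma \ref{lem:coord} is convex. First I would record that $\mathbf{B}_\text{per}$ is closed on $U$: the full field $\mathbf{B}$ is closed because it is a magnetic field, and the model part $f(n)dn\wedge d\theta$ is closed as computed above, so $\mathbf{B}_\text{per} = \mathbf{B} - f(n)dn\wedge d\theta$ is closed. Since the half-ball is convex it is star-shaped about any interior point $q_0$, and the closed segment $t \mapsto q_0 + t(q - q_0)$, $t \in [0,1]$, from $q_0$ to any $q \in \mathbf{q}(U)$ stays inside $\mathbf{q}(U)$; in particular it never meets $\partial M$, so along the whole contraction $\mathbf{B}_\text{per}$ is controlled by the $C^1$-bound of Definition \ref{def:pert}.

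Writing $\mathbf{B}_\text{per} = \tfrac12\, b_{ij}\, dq^i\wedge dq^j$ in these coordinates (with $b_{ij} = -b_{ji}$), I would define $\mathbf{A}_\text{per} = a_k\, dq^k$ by
\[
a_k(q) = \sum_i (q^i - q_0^i)\int_0^1 t\, b_{ik}\bigl(q_0 + t(q - q_0)\bigr)\, dt.
\]
The standard homotopy identity $dK + Kd = \mathrm{id}$ for this operator, together with $d\mathbf{B}_\text{per} = 0$, gives $d\mathbf{A}_\text{per} = \mathbf{B}_\text{per}$, so $\mathbf{A}_\text{per}$ is a genuine primitive.

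It remains to produce the $C^1$ bound, and this is where the one real point of the argument lies. The hypothesis controls the covariant quantities $|\mathbf{B}_\text{per}|_g$ and $|\nabla \mathbf{B}_\text{per}|_g$, while the estimates above involve the coordinate components $b_{ij}$ and their partials $\partial_k b_{ij}$. To cross between the two I would use that $\overline{U}$ is compact and that the chart $\mathbf{q}$ is smooth up to the boundary, so the metric $g_{ij}$, its inverse $g^{ij}$, and the Christoffel symbols $\Gamma^k_{ij}$ are all bounded on $\mathbf{q}(U)$; feeding these into the covariant bounds (via $\partial_k b_{ij} = (\nabla_k\mathbf{B}_\text{per})_{ij} + \Gamma^m_{ki}b_{mj} + \Gamma^m_{kj}b_{im}$) produces uniform bounds $|b_{ij}| \le C_0$ and $|\partial_k b_{ij}| \le C_1$ throughout $U$. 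Because $\mathbf{q}(U)$ has diameter at most $2r$ the factors $|q^i - q_0^i|$ are bounded, and the elementary integrals $\int_0^1 t\, dt$ are finite, so $|a_k| \le C$ at once.

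For the first derivatives I would differentiate under the integral sign --- legitimate since the integrand is smooth on the compact product $[0,1] \times \overline{\mathbf{q}(U)}$ --- obtaining by the chain and product rules
\[
\partial_\ell a_k(q) = \int_0^1 t\, b_{\ell k}(q_0 + t(q-q_0))\, dt + \sum_i (q^i - q_0^i)\int_0^1 t^2\, (\partial_\ell b_{ik})(q_0 + t(q - q_0))\, dt.
\]
Each summand is controlled by $C_0$, $C_1$, and the diameter bound, yielding $|\partial_\ell a_k| \le C'$ and hence $C^1$-boundedness of $\mathbf{A}_\text{per}$. The main obstacle I anticipate is precisely the conversion of the intrinsic $C^1$ hypothesis on $\mathbf{B}_\text{per}$ into the coordinate $C^1$ bounds on $b_{ij}$; once the compactness of $\overline{U}$ supplies uniform control of the metric data, the rest is the routine Poincar\'{e} construction on a convex domain, whose two structural inputs are the convexity of the half-ball (keeping the contraction away from $\partial M$) and the boundedness of the chart.
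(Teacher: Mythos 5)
Your construction is essentially the paper's own proof: the paper also applies the explicit Poincar\'{e} homotopy operator to $\mathbf{B}_\text{per}$ over the half-ball of Lemma \ref{lem:coord} (contracting along the radial flow $\phi_t(q)=e^{-t}q$ toward the origin rather than toward an interior point $q_0$, which after the substitution $s=e^{-t}$ yields exactly your formula $a_i(q)=q^j\int_0^1 s\,b_{ij}(sq)\,ds$), and then reads off the $C^0$ and $C^1$ bounds on the coefficients from the $C^1$-boundedness of $\mathbf{B}_\text{per}$, just as you do. Your added care in converting the covariant bounds on $|\mathbf{B}_\text{per}|_g$, $|\nabla\mathbf{B}_\text{per}|_g$ into coordinate bounds on $b_{ij}$, $\partial_k b_{ij}$ is a detail the paper leaves implicit, and is correct; only note that differentiation under the integral should be justified by these uniform $C^1$ bounds (dominated convergence) rather than by smoothness on $[0,1]\times\overline{\mathbf{q}(U)}$, since $\mathbf{B}_\text{per}$ need not extend smoothly to the face $q^1=0$.
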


\begin{proof} Following the idea in the standard proof of Poincar\'{e}'s lemma, using our chart from lemma \ref{lem:coord} we consider the negative radial vector field defined over $U$ by:
\begin{equation*}
V_q = -q^i\frac{\partial}{\partial q^i}
\end{equation*}

Its flow is simply $\phi_t(q) = e^{-t}q$. Notice that we have $\phi_t(U) \subset U$ for $t \geq 0$ by condition (3) of lemma \ref{lem:coord}, so the forward flow of $V$ remains inside $U$.

Now we define an averaging operator $h:\Omega^k(U) \to \Omega^k(U)$ by:
\begin{equation*}
h\omega = -\int_0^\infty \phi_t^* \omega dt
\end{equation*}

A straightforward computation using Cartan's formula allows one to show that given any $k$-form $\omega$, the $(k-1)$-form $h\iota_V\omega$ is always one of its primitives. Here the notation $\iota_V\omega$ stands for the contraction with the vector field $V$.

We may then choose
\begin{equation*}
\mathbf{A}_\text{per} = h\iota_V\mathbf{B}_\text{per}.
\end{equation*}

We will show that this choice of $A_\text{per}$ is $C^1$-bounded. Let's denote the coefficients of $\mathbf{B}_\text{per}$ by:
\begin{equation*}
\mathbf{B}_\text{per} = \left[ b_{ij}dq^i\wedge dq^j \right]_{i<j} = \left[ b_{ij}dq^i\otimes dq^j - b_{ij}dq^j\otimes dq^i \right]_{i<j} = b_{ij}dq^i\otimes dq^j
\end{equation*}
where we make $b_{ij} = -b_{ji}$ when $i>j$ and $b_{ii} = 0$. We then compute:
\begin{equation*}
\begin{aligned}
\mathbf{A}_\text{per} = & -\int_0^\infty \phi_t^* \iota_V \mathbf{B}_\text{per} dt \\
								 = & -\int_0^\infty \phi_t^* ( -q^ib_{ij}dq^j) dt
\end{aligned}
\end{equation*}

Since $\phi_t^*dq^i = e^{-t}dq^i$ we obtain:
\begin{equation*}
\mathbf{A}_\text{per} = -\int_0^\infty (-e^{-t}q^ib_{ij}(e^{-t}q)e^{-t}dq^j) dt
\end{equation*}
so that the coefficients of $\mathbf{A}_\text{per}$ obey the formula:
\begin{equation*}
a_i(q) = q^j\int_0^\infty e^{-2t}b_{ij}(e^{-t}q)dt
\end{equation*}
and changing variables $s=e^{-t}$ we obtain:
\begin{equation*}
a_i(q) = q^j\int_0^1 sb_{ij}(sq)ds
\end{equation*}

Since $|\mathbf{B}_\text{per}|$ and $|\nabla\mathbf{B}_\text{per}|$ are bounded we see that the integrals defining the coefficients $a_i$ converge and are bounded. Furthermore we obtain
\begin{equation*}
\partial_k a_i(q) = q^j\int_0^1 s \partial_kb_{ij}(sq)ds
\end{equation*}
so that the derivatives $\partial_k a_i$ are all bounded as well, which finishes the proof.
\end{proof}

From now on we choose $\mathbf{A}_\text{per}$ to be the primitive provided by the lemma.

Let's introduce some notation in order to carry out a few local computations. Using the trivialization of $T^*M$ induced by the chart $\mathbf{q}$ on $M$, write:
  \begin{equation*}
  \begin{array}{ccc}
  p = p_i dq^i	& \mathbf{A} = A_idq^i	& \mathbf{A}_\text{per} = a_idq^i
  \end{array}
  \end{equation*}

As before we will also denote $A_2$ by the more suggestive notation $A_\theta$. Notice that by the condition required from $f(n)$ we have that $|\mathbf{A}_\theta(c(t))|\to\infty$ as $t \to t_{\text{max}}$, since
\begin{equation*}
\begin{aligned}
A_\theta(c(t)) =		& F(n(c(t))) + a_\theta(c(t)) \\
			  =		& -\int_{n(c(t))}^\varepsilon f(m) dm + a_\theta(c(t))
\end{aligned}
\end{equation*}
and $a_\theta$ is bounded. We are able to derive a contradiction by proving that $A_\theta$ may not go to infinity in finite time. This is due to the following:

\begin{proposition}Let $c(t)$ be a $\mathbf{B}$-geodesic, using the chart above one has:
  \begin{equation}
  |A_\theta(c(t))| \leq C_0 + C_1|t|
  \end{equation}
for some positive constants $C_0,C_1>0$.
\end{proposition}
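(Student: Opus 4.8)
The plan is to use the Hamiltonian structure attached to the potential $\mathbf{A} = F(n)d\theta + \mathbf{A}_\text{per}$ and to exploit the fact that, in the coordinates of Lemma \ref{lem:coord}, the angle $\theta = q^2$ behaves like an almost cyclic coordinate. The only part of the potential that blows up at the boundary is the term $F(n)\,d\theta$, and $F$ depends on $n = q^1$ alone; consequently the singular behaviour is invisible to the derivative $\partial_\theta$. I would therefore control the conjugate momentum $p_\theta = p_2$ directly through Hamilton's equation $\dot{p}_\theta = -\partial_\theta H_\mathbf{A}$, show that its growth is at most linear, and then transfer the bound to $A_\theta$.

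The key step is to prove that $\partial_\theta H_\mathbf{A}$ is bounded along $c(t)$. Writing $H_\mathbf{A} = \frac{1}{2m}g^{ij}(p_i - eA_i)(p_j - eA_j)$, Hamilton's equations give the velocity relation $p_i - eA_i = m\,g_{ij}\dot{q}^j$, while conservation of energy forces $|\dot{q}|_g^2 = g_{ij}\dot{q}^i\dot{q}^j$ to be constant. Because the metric is smooth on the compact closure $\overline{U}$, it is there uniformly equivalent to the Euclidean metric of the chart, so the coordinate velocities $\dot{q}^i$ are uniformly bounded, as are $g^{ij}$, $g_{ij}$ and their first derivatives. Differentiating,
\begin{equation*}
\partial_\theta H_\mathbf{A} = \frac{1}{2m}(\partial_\theta g^{ij})(p_i - eA_i)(p_j - eA_j) - \frac{e}{m}g^{ij}(\partial_\theta A_i)(p_j - eA_j),
\end{equation*}
and here $\partial_\theta A_i = \partial_\theta a_i$ for every $i$, since the $n$-dependent term $F(n)$ is annihilated by $\partial_\theta$; these quantities are bounded because $\mathbf{A}_\text{per}$ is $C^1$-bounded. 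Substituting the velocity relation turns the first summand into $\frac{m}{2}(\partial_\theta g^{ij})g_{ik}g_{jl}\dot{q}^k\dot{q}^l$ and the second into $-e(\partial_\theta a_i)\dot{q}^i$, each a product of quantities already shown to be bounded. Hence $|\partial_\theta H_\mathbf{A}| \le C$ for some constant, and integrating $\dot{p}_\theta = -\partial_\theta H_\mathbf{A}$ yields $|p_\theta(c(t))| \le |p_\theta(c(0))| + C|t|$.

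It then remains to transfer the bound from $p_\theta$ to $A_\theta$, which I would do by applying the velocity relation in the single index $i = 2$: $eA_\theta = p_\theta - m\,g_{2j}\dot{q}^j$. The subtracted term is bounded by the estimates above, so $|A_\theta(c(t))| \le C_0 + C_1|t|$ for suitable positive constants $C_0, C_1$, which is the claim. The main obstacle is conceptual rather than computational: one must recognise that differentiating $H_\mathbf{A}$ in the cyclic-like direction $\theta$ eliminates the singular term $F(n)$ altogether, after which every surviving term is tamed by conservation of energy together with the $C^1$-boundedness of $\mathbf{A}_\text{per}$ and the smoothness of $g$ up to $\overline{U}$.
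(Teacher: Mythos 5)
Your proof is correct and follows essentially the same route as the paper: Hamilton's equations for the conjugate momentum $p_\theta$, boundedness of $\partial_\theta H_\mathbf{A}$ along the trajectory (using that $\partial_\theta$ annihilates $F(n)$, the $C^1$-bound on $\mathbf{A}_\text{per}$, conservation of energy, and smoothness of $g$ up to $\overline{U}$), integration to get linear growth of $p_\theta$, and finally transferring the bound to $A_\theta$ via the boundedness of $p_\theta - eA_\theta$. The only cosmetic difference is that you phrase the bounds through the velocity relation $p_i - eA_i = m\,g_{ij}\dot{q}^j$, whereas the paper bounds the terms $(p_i - eA_i)$ directly from constancy of $H$; these are equivalent.
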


\textbf{Proof:} Notice that using the standard symplectic form $\omega_0$ on  $T^*U$, from Hamilton's equations (\ref{eq:hamilton}) we have:
  \begin{equation*}
  |\dot{p}_\theta(t)| = \left| \frac{\partial H}{\partial\theta} \right|
  \end{equation*}

By using expression (\ref{eq:hamiltonian}) for the Hamiltonian we obtain in charts:
  \begin{equation*}
  H = \frac{1}{2m}g^{ij}(p_i-eA_i)(p_j-eA_j)
  \end{equation*}

We then derive the following formula for $|\dot{p}_\theta|$:
\begin{equation*}
\left| \frac{1}{2m}(\partial_\theta g^{ij})(p_i-eA_i)(p_j-eA_j)) - \frac{e}{m}g^{ij}(p_i-eA_i)\left( \frac{\partial A_j}{\partial\theta} \right) \right|
\end{equation*}

Since $H$ is constant along the trajectory, the terms above of the form $(p_i-eA_i)$ are bounded along $c(t)$ and since $g$ is smooth and defined over the closed domain $M$, the terms $g^{ij}$ and $\partial_\theta g^{ij}$ are also bounded.

Lastly, notice that $\partial_\theta A_i = \partial_\theta a_i$ since for $i \neq 2$, in fact $A_i = a_i$, and for $i=2$, we have $A_\theta = F(n) + a_\theta$. We conclude that those terms are also bounded since $A_\text{per}$ was chosen so that it is $C^1$-bounded. This means that $|\dot{p}_\theta|$ is bounded along $c(t)$.

Integrating this inequality we obtain that $|p_\theta|$ is bounded by a linear function
\begin{equation*}
|p_\theta(c(t))| \leq C_0 + C_1|t|
\end{equation*}
and finally since $p_\theta - eA_\theta$ is bounded we obtain
\begin{equation*}
|A_\theta(c(t))| \leq C_0 + C_1|t|
\end{equation*}
with possibly different constants $C_0,C_1$. This finishes the proof of the proposition and the proof of the theorem.\\


\textbf{Acknowledgements}\\

I would like to thank Richard Montgomery for his guidance and encouragement during this project. I would also like to thank Yves Colin de Verdi\`{e}re and Fran\c{c}oise Truc for sharing this interesting problem in their work and providing very helpful comments. This material is based upon work supported by the National Science Foundation under Grant No. DMS-1440140 while the author was in residence at the Mathematical Sciences Research Institute.



\begin{thebibliography}{7}

\bibitem{A}Arnold, V. I.:
{Some remarks on flows of line elements and frames}, Dokl. Akad. Nauk. SSSR \textbf{138}, (1961) pp. 255-257

\bibitem{A2}Arnold, V. I.:
{Small denominators and problems of stability of motion in classical and celestial mechanics}, Russ Math Surv \textbf{18.6}, (1963) pp. 85-190

\bibitem{B}Braun, M.:
{Particle Motions in a Magnetic Field}, Journal of Differential Equations \textbf{8}, (1970) pp. 294-332

\bibitem{C}Castilho, C.:
{The Motion of a Charged Particle on a Riemannian Surface under a Non-Zero Magnetic Field}, Journal of Differential Equations \textbf{171.1}, (2001) pp. 110-131

\bibitem{Cav}Cavalcanti, G. R.:
{Examples and counter-examples of log-symplectic manifolds}, Journal of Topology \textbf{10.1}, (2017) pp. 1-21

\bibitem{CdVT}Colin de Verdi\`{e}re, Y.; Truc, F.:
{Confining quantum particles with a purely magnetic field}, Annales de l'Institut Fourier \textbf{60.7}, (2010) pp. 2333-2356

\bibitem{GMP} Guillemin, V.; Miranda, E.; Pires, A. R.:
{Symplectic and Poisson Geometry on b-Manifolds}, Advances in Mathematics \textbf{264} (2014) pp. 864-896

\bibitem{Ma}Martins, G.:
{The Hamiltonian Dynamics of Planar Magnetic Confinement}, Nonlinearity \textbf{30.12}, (2017)

\bibitem{M}Montgomery, R.:
{Hearing the zero locus of a magnetic field}, Commun Math Phys \textbf{168}, (1995) pp. 651-675


\bibitem{Ti}Tischler, D.: {On Fibering Certain Foliated Manifolds over $S^1$}, Topology \textbf{9}, (1970) pp. 153-154

\bibitem{Tr}Truc, F.: {Trajectoires born\'{e}es d'une particule soumise \`{a} un champ magn\'{e}tique sym\'{e}trique lin\'{e}aire}, Annales de l'Institut Henri Poincar\'{e}, section A \textbf{64.2}, (1996) pp. 127-154

\end{thebibliography}
\end{document}